\newtheorem{theorem}{Theorem}
\newtheorem{lemma}[theorem]{Lemma}
\newtheorem{defn}{Definition}
\begin{document}
%
% paper title
% can use linebreaks \\ within to get better formatting as desired
\title{Cauchy MDS Array Codes With Efficient Decoding Method}

\author{Hanxu Hou and Yunghsiang S. Han,~\IEEEmembership{Fellow,~IEEE}
\thanks{H. Hou is with the School of Electrical Engineering \& Intelligentization, Dongguan University of Technology~(E-mail:houhanxu@163.com) and Y. S. Han is with the School of Electrical Engineering \& Intelligentization, Dongguan University of Technology and the Department of Electrical Engineering, National Taiwan University of Science and Technology~(E-mail: yunghsiangh@gmail.com).}
 }

\markboth{IEEE Transactions on Communications}%
{Submitted paper}

% make the title area
\maketitle
\vspace{-1.8cm}
\begin{abstract}
Array codes have been widely used in communication and storage systems. To reduce computational complexity, one important property of the array codes is that only XOR operation is used in the encoding and decoding process. In this work, we present a novel family of maximal-distance separable (MDS) array codes based on Cauchy matrix, which can correct up to any number of failures. We also propose an efficient decoding method for the new codes to recover the failures. We show that the encoding/decoding complexities of the proposed approach are lower than those of existing Cauchy MDS array codes, such as Rabin-Like codes and  CRS codes. Thus, the proposed MDS array codes are attractive for distributed storage systems.
%The proposed MDS array code are more efficient, in all relevant complexity measures, than the best known MDS array codes with four or more fault-tolerant. These measures include encoding complexity, decoding complexity and update complexity.
% Thus the proposed MDS array code is practically very meaningful for storage systems that need
%higher reliability.
\end{abstract}

% Note that keywords are not normally used for peerreview papers.
\begin{IEEEkeywords}
MDS array code, efficient decoding, computational complexity, storage systems.
\end{IEEEkeywords}

\IEEEpeerreviewmaketitle

\section{Introduction}
\IEEEPARstart{A}rray codes are  error and burst correcting codes that have been widely employed in communication and storage systems~\cite{RAID89, RAID93} to  enhance data reliability. A common property of the array codes is that the encoding and decoding algorithms use only XOR  (exclusive OR) operations. A binary array code consists of an array of size $m \times n$, where each element in the array stores one bit. Among the $n$ columns (or data disks) in the array, the first $k$ columns are \emph{information columns} that store  \emph{information bits}, and the last $r$ columns are \emph{parity columns} that store \emph{parity bits}. Note that $n=r+k$. When a data disk fails, the corresponding column of the array code is considered as an \emph{erasure}. If the array code can tolerate arbitrary $r$ erasures, then it is named as a Maximum-Distance Separable (MDS) array code. In other words, in an MDS array code, the information bits can be recovered from any $k$ columns.

Besides the MDS property, the  performance of an MDS array code also depends on  encoding and decoding complexities. \emph{Encoding complexity} is defined as the number of XORs required to construct the parities and \emph{decoding complexity} is defined as the number of XORs required to recover the erased columns from any surviving $k$ ones.
%Another crucial performance measure of MDS array code is the \emph{update complexity}, defined as the average number of parity bits affected by a change of a single information bit in the code. A low update complexity is desired in practical applications.  %\cite{blaum1995evenodd,blaum1996mds,xiang2010optimal,wang2010rebuilding,huang2008star},
The encoding and decoding procedures  of the array codes studied in most literature use simple XOR operations, that can be easily and most efficiently implemented. The MDS array codes proposed in this paper are also based on XOR operations.

\subsection{Related Work}
Row-diagonal parity (RDP) code proposed in~\cite{corbett2004row} and EVENODD code in~\cite{blaum1995evenodd} can tolerate two arbitrary disk erasures.
Due to increasing capacities of hard disks and requirement of  low bit error rates,
the protection offered by double parities will soon be inadequate. The issue of reliability is more pronounced in solid-state drives (SSD), which have significant wear-out rate when the frequencies of disk writes are high. Indeed, triple-parity RAID (Redundant Arrays of Inexpensive Disks) has already been advocated  in storage technology~\cite{BeyondRAID}.
Construction of array codes recovering multiple disk erasures is relatively rare, in compare to array codes
recovering double erasures. We name the existing MDS array codes in
\cite{corbett2004row,blaum1995evenodd,blaum1996mds,xiang2010optimal,blaum2001evenodd,huang2008star,wang2012triple,blaum2006family,feng2005new2}
as \emph{Vandermonde MDS array codes}, as their constructions are based on Vandermonde matrices.

Among the Vandermonde MDS array codes, BBV (Blaum, Bruck and Vardy) code~\cite{blaum1996mds,blaum2002evenodd}, which is an extension of the EVENODD code  for three or more parity columns, has the
best fault-tolerance. In \cite{blaum1996mds}, it is
proved that an extended BBV code is always an MDS code for three parity columns, but may not
be an MDS code for four or more party columns. A necessary and sufficient condition for the extended BBV code with four parity columns to be an MDS code
is given in \cite{blaum1996mds}, and some results for no more than eight parity columns are provided.

Another family of MDS array codes is called \emph{Cauchy MDS array codes}, which is constructed based on
Cauchy matrices. CRS codes in~\cite{Blomer1999An}, Rabin-like codes in~\cite{feng2005new2} and Circulant Cauchy codes
in~\cite{schindelhauer2013maximum} are examples of Cauchy MDS array codes.
Bl$\ddot{o}$mer {\em et al.} constructed CRS codes by employing a Cauchy matrix to perform encoding
(and upon failure, decoding) over a finite field instead of a binary field \cite{Blomer1999An}. In this approach,
the isomorphism and companion methods converting  a normal finite field operation to a binary field XOR
operation are necessary. The idea  is to replace an original symbol in the finite field with a matrix in
another finite field.
%Since there are lots of ``1''s in the result, the converted
%matrix requires a lot of XOR operations.
The authors in \cite{schindelhauer2013maximum} considered a special class of CRS codes,  called Circulant
Cauchy codes, that has lower encoding and decoding complexity than CRS codes. Based on the concept of permutation matrix, Feng \cite{feng2005new2}
gave a construction method to convert the Cauchy matrix to a sparse matrix. Compared with
the Vandermonde MDS array codes,  Cauchy
MDS array codes have better fault-tolerance at a cost of higher computational complexity. The purpose of this paper
is to give a new construction of Cauchy MDS array codes and to propose an efficient decoding method for the proposed codes.
\subsection{Cauchy Reed-Solomon Code}
Cauchy Reed-Solomon (CRS) code \cite{Blomer1999An} is one variant of RS codes that has better coding
complexity by employing Cauchy matrix. The key to constructing good CRS codes is the selection of Cauchy matrices.
%Since then, CRS codes are widely used
%in communication and storage systems, e.g., OceanStore \cite{kubiatowicz2000oceanstore} and Facebook data
%centers \cite{ellison2007benefits}. There are many versions of implementation of CRS codes, and \emph{Jerasure}
%library \cite{plank2005optimizing} is considered to be one of the most practical version that can achieve the
%optimal encoding performance. Note that the $n-k$ parity-check packets generated by CRS codes using a
Given $k$ data symbols in finite field $\mathbb{F}_{2^w}$, we can generate $r$ encoded symbols of a CRS code
with $k+r\leq 2^w$
in the following way. Let $X=\{x_1,\ldots,x_r\}$, $Y=\{y_1,\ldots,y_k\}$, where $X\cap Y=\emptyset$, such that each  $x_i$ and $y_i$ is a distinct
element in $\mathbb{F}_{2^w}$. The entry $(i,j)$ of the Cauchy matrix is calculated as $1/(x_i+y_j)$. Since
each element of $\mathbb{F}_{2^w}$ can be represented by a $w\times w$ binary matrix,
we can transform the $r\times k$ Cauchy matrix into an $rw\times kw$ \emph{binary distribution matrix}.
We divide each data symbol into $w$ trips. The $r$ encoded symbols are created by multiplying the $rw\times kw$
binary distribution matrix and $kw$ data strips. During the multiplication process, when there exists ``1'' in every row of
the binary distribution matrix, we do XOR operations on the corresponding data strips to obtain the strips of encoded
symbols.

\begin{figure}
\centering
\includegraphics[width=0.85\textwidth]{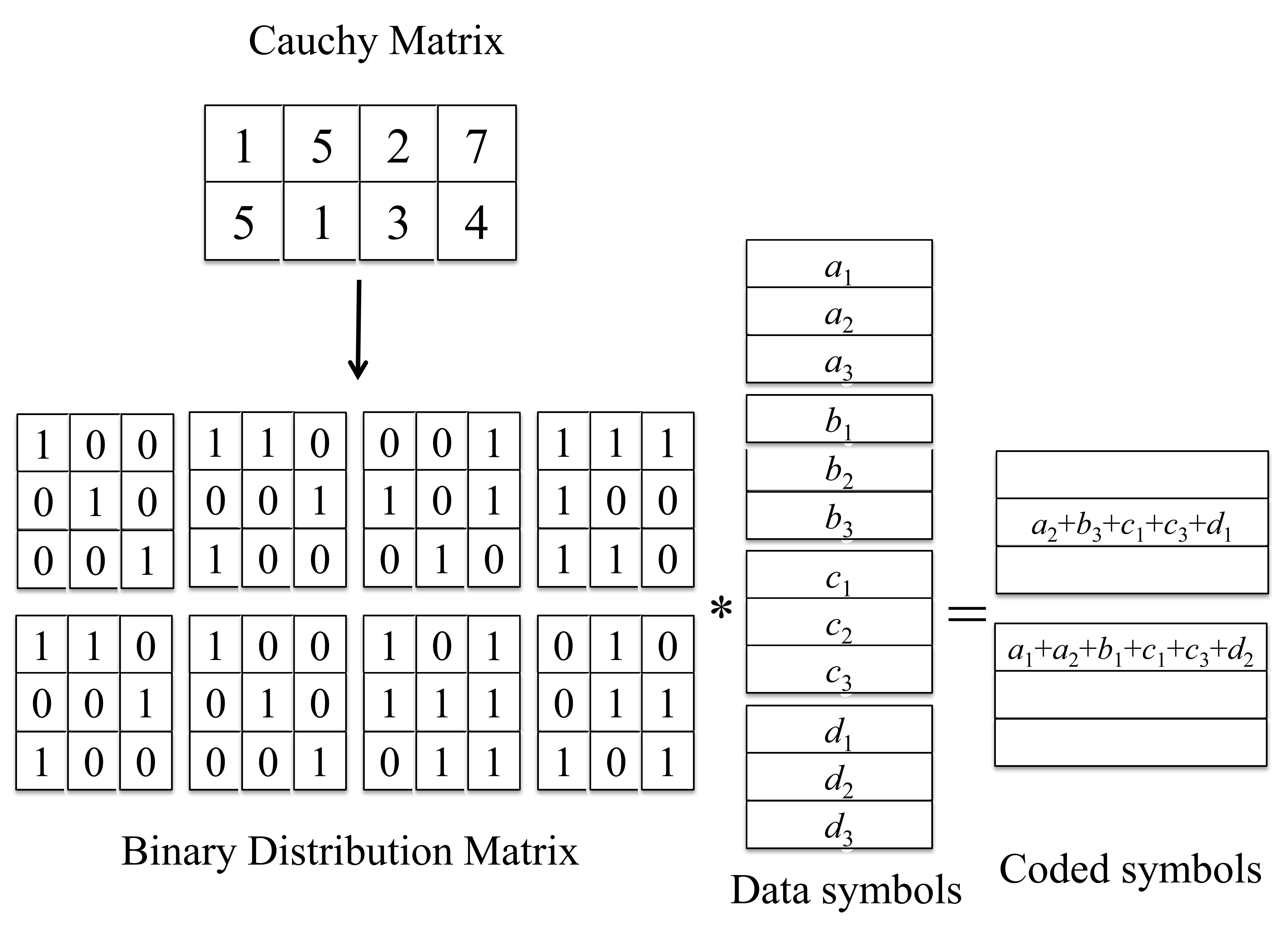}
\caption{The encoding process of encoded symbols of CRS codes with $k=4$,
$r=2$ and $w=3$ over $\mathbb{F}_8$.}
\label{example}
\end{figure}

With binary distribution matrix, one may create a strip of encoded symbol as the XOR of all data strips whose
corresponding columns of the binary distribution matrix have all ones. Note that, in this approach, the expensive matrix multiplication
is replaced by binary addition. Hence, this is a great improvement over standard RS codes. For more information
about the encoding and decoding process of CRS codes, please refer to \cite{Plank2006Optimizing}. Fig. \ref{example} displays the encoding process of encoded symbols for a CRS code with $k=4$, $r=2$, and
$w=3$ over $\mathbb{F}_8$. The first strip of the first encoded symbol and the second strip of the second encoded symbol may be calculated as
\begin{align*}
&a_2+b_3+c_1+c_3+d_1 \text{ and }\\
&a_1+a_2+b_1+c_1+c_3+d_2
\end{align*}
respectively. The two strips can be calculated by 9 XORs.

To improve the coding performance of a distributed storage system, one should reduce the number of XORs in the
coding processes. There are two approaches to achieve this goal.
\begin{enumerate}
\item \textbf{Choosing ``good'' Cauchy matrix}. Since the Cauchy matrix dictates the number of XORs~\cite{Plank2006Optimizing}, many researchers~\cite{Plank2006Optimizing,Blaum1999On,Plank2009A} had designed codes with low density Cauchy matrices. However, the only way to find lowest-density Cauchy matrices is to enumerate all the matrices and select the best one, where the number of matrices is exponential in $k$ and $r$. Therefore, this method is only feasible for  small $k$ and $r$. For example, when the parameters $k,r,w$ are small, the performance of CRS is optimized\cite{Plank2006Optimizing,plank2008jerasure}
by choosing the Cauchy matrix of which the corresponding binary distribution matrix has the lowest ``1''s.
\item \textbf{Encoding data using schedule.} The issue of exploiting common sums in the XOR equations is
addressed in \cite{Huang2005On,Plank2011XOR}. However, finding a good schedule with minimum XORs is still an open problem. Some heuristic schedules are proposed in \cite{Hafner2005Matrix,Yin2011Acoustic,Plank2012Heuristics,Zhang2015CaCo}. In the above example, the two strips containing  $c_1+c_3$ are treated as a subexpression. Therefore, if the bit $c_1+c_3$ is calculated before the calculation of two strips, then the two strips can be computed recursively by $x_1=c_1+c_3,x_2=a_2+b_3,x_3=x_2+x_1,x_4=x_3+d_1$ and $x_5=a_1+a_2,x_6=x_5+b_1,x_7=x_6+x_1,x_8=x_7+d_2$ with 8 XORs.
\end{enumerate}
\subsection{Contribution of This Paper}

In this paper, we present a new class of Cauchy MDS array codes, The proposed Cauchy MDS array codes $\mathcal{C}(k,r,p)$ are similar to CRS codes, except that $\mathcal{C}(k,r,p)$ are defined over a specific polynomial ring with a cyclic structure, rather than over a finite field. An efficient decoding algorithm is designed based on LU factorization of Cauchy matrix, which provides significant simplification of the decoding procedure for $\mathcal{C}(k,r,p)$. We demonstrate that the proposed $\mathcal{C}(k,r,p)$ has lowest encoding complexity and decoding complexity among the existing Cauchy array codes.
%The computational complexity reduction is made by two reasons. First, the proposed $\mathcal{C}(k,r,p)$ is constructed over a binary cyclic ring rather than a finite field. The multiplication in the binary cyclic ring can be implemented by cyclically shift the binary vectors by some bits, and the division of the certain form can be efficiently computed. Second, by employing an LU factorization of the inverse of Cauchy matrix, all the operations involved in the decoding procedure can be implemented efficiently. The computational complexity of solving the Cauchy linear system in the decoding process can be reduced greatly for $\mathcal{C}(k,r,p)$.

The rest of paper is organized as follows. In Section \ref{sec:array_code}, we give the construction of $\mathcal{C}(k,r,p)$ codes. After proving the MDS property of the proposed $\mathcal{C}(k,r,p)$ codes in Section~\ref{sec:MDS}, we give an efficient decoding method for any number of erasures in Section~\ref{sec:decode}. Section~\ref{sec:comparison} compares the computational complexity of encoding and decoding with the existing well-known MDS array codes, in terms of the number of XORs in computation. We conclude in
Section~\ref{sec:discussions}.
% You must have at least 2 lines in the paragraph with the drop letter
% (should never be an issue)

\section{A New Construction of Array Code}
\label{sec:array_code}
In this section, we will give a general construction of $\mathcal{C}(k,r,p)$ codes. Before that, we first introduce some facts on binary parity-check codes.

\subsection{Binary Parity-check Codes}
 A linear code $C$ over $\mathbb{F}_2$ is called a \emph{binary cyclic code} if, whenever $\textbf{\emph{c}}=(c_0,c_1,\ldots, c_{p-1})$ is in $C$, then $\textbf{\emph{c}}'=(c_{p-1},c_0,\ldots, c_{p-2})$ is also in $C$. The codeword $\textbf{\emph{c}}'$ is obtained by cyclically shifting the components of the codeword $\textbf{\emph{c}}$ one place to the right.
Let $p>2$ be a prime number and let $R_p$ be the ring
\begin{equation}
R_p \triangleq \mathbb{F}_2[x]/(1+x^p).
\label{eq:R}
\end{equation}
Every element of $R_p$ will be referred to as \emph{polynomial} in the sequel. The vector $(a_0,a_1,\ldots, a_{p-1}) \in \mathbb{F}_2^{p}$ is the codeword corresponding to the polynomial $\sum_{i=0}^{p-1} a_i x^i$. The indeterminate $x$ represents the \emph{cyclic-right-shift} operator on the codewords.  A subset of $R_p$ is a binary cyclic code of length $p$ if the subset is closed under addition and closed under multiplication by~$x$.

Consider the simple \emph{parity-check code}, $C_p$, which consists of polynomials in $R_p$ with even number of non-zero coefficients,
\begin{equation} C_p = \{ a(x) (1+x) \mod (1+x^p)|\, a(x) \in R_p \}.
\label{eq:C}
\end{equation}
The dimension of $C_p$ over $\mathbb{F}_2$ is $p-1$.  The {\em check polynomial} of $C_p$ is $h(x) = 1+x+\cdots + x^{p-1}$. That is, $\forall s(x)\in C_p$ and $c(x)\in R_p$, we have
\begin{equation}
s(x)(c(x)+h(x))=s(x)c(x) \mod (1+x^p),
\label{zero}
\end{equation}
since
\begin{align*}
s(x)h(x)&=(a(x)(1+x)\mod (1+x^p))h(x)\mod (1+x^p)\\
&=a(x)((1+x)h(x))\mod (1+x^p)\\
&= 0 \mod (1+x^p).
\end{align*}

Recall that, in a general ring $R$ with identity, there exists the identity $e$ such that $ue=eu=u$, $\forall u\in R$. The identity element of $C_p$ is
\[ e(x)\triangleq 1+h(x) = x + x^2 + \cdots + x^{p-1}. \]

We show that $C_p$ is isomorphic to $\mathbb{F}_2[x]/(h(x))$ in the next lemma.
\begin{lemma}
Let $p>2$ be a prime number, ring $C_p$ is isomorphic to $\mathbb{F}_2[x]/(h(x))$.
\label{lm:ismp}
\end{lemma}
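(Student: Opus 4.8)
The plan is to exhibit an explicit ring isomorphism $\varphi : C_p \to \mathbb{F}_2[x]/(h(x))$ and verify it preserves both operations and the identity. The natural candidate is the reduction map: for $s(x) \in C_p$, set $\varphi(s(x)) = s(x) \bmod h(x)$. First I would check this is well-defined and additive, which is immediate since reduction modulo $h(x)$ is an $\mathbb{F}_2$-linear map. The substantive points are that $\varphi$ respects multiplication and sends the identity $e(x) = x + x^2 + \cdots + x^{p-1}$ of $C_p$ to the identity $1$ of $\mathbb{F}_2[x]/(h(x))$. For the identity claim, note $e(x) = 1 + h(x) \equiv 1 \pmod{h(x)}$, so $\varphi(e(x)) = 1$.

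For multiplicativity, the key observation is the factorization $1 + x^p = (1+x)h(x)$ over $\mathbb{F}_2$, together with the fact that $1+x$ and $h(x)$ are coprime in $\mathbb{F}_2[x]$ (since $h(1) = p \bmod 2 = 1 \neq 0$, as $p$ is an odd prime, so $x=1$ is not a root of $h$). By the Chinese Remainder Theorem, $R_p = \mathbb{F}_2[x]/(1+x^p) \cong \mathbb{F}_2[x]/(1+x) \times \mathbb{F}_2[x]/(h(x))$. Under this decomposition, $C_p$ — being the set of multiples of $(1+x)$ in $R_p$ — corresponds exactly to $\{0\} \times \mathbb{F}_2[x]/(h(x))$, and the projection onto the second factor is precisely the reduction map $\varphi$. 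Multiplicativity then follows because CRT projections are ring homomorphisms; alternatively one argues directly that for $s(x), t(x) \in C_p$, the product $s(x)t(x)$ computed in $C_p$ (i.e., in $R_p$) reduces mod $h(x)$ to the same thing as the product of the reductions, because the difference is a multiple of $h(x)$ in $\mathbb{F}_2[x]$ after accounting for the mod $(1+x^p)$ wrap-around, and $h(x) \mid (1+x^p)$.

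A cleaner route avoiding explicit CRT bookkeeping: show $\varphi$ restricted to $C_p$ is a bijection onto $\mathbb{F}_2[x]/(h(x))$ and a ring homomorphism, using the already-established facts. Injectivity: if $s(x) \in C_p$ and $h(x) \mid s(x)$ in $\mathbb{F}_2[x]$ with $\deg s < p$, then since also $(1+x) \mid s(x)$ (membership in $C_p$) and $\gcd(1+x, h(x)) = 1$, we get $(1+x)h(x) = 1+x^p \mid s(x)$, forcing $s(x) = 0$. Since $\dim_{\mathbb{F}_2} C_p = p-1 = \dim_{\mathbb{F}_2} \mathbb{F}_2[x]/(h(x))$, injectivity gives bijectivity. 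For the homomorphism property, use \eqref{zero}: for $s(x) \in C_p$, multiplication by $s(x)$ is insensitive to adding $h(x)$, which lets one replace the $\bmod(1+x^p)$ reduction by $\bmod\, h(x)$ when computing products of elements of $C_p$.

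The main obstacle is purely bookkeeping: carefully reconciling the ambient reduction modulo $1+x^p$ (inherent in the ring $R_p$ where $C_p$'s multiplication lives) with the target reduction modulo $h(x)$, and confirming these are compatible via $h(x) \mid (1+x^p)$. Once the coprimality $\gcd(1+x, h(x)) = 1$ and the dimension count are in hand, everything else is routine; I would present the proof via the injectivity-plus-dimension argument since it is the shortest and leans directly on \eqref{eq:C}, \eqref{zero}, and the stated dimension of $C_p$.
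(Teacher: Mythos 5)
Your proof is correct, and it uses the same homomorphism as the paper (reduction modulo $h(x)$) but a genuinely different argument that it is bijective. The paper exhibits an explicit inverse map $\phi(g(x)) = g(x)\,e(x) \bmod (1+x^p)$ and verifies $\phi\circ\theta = \mathrm{id}_{C_p}$ by direct computation, using $f(x)h(x)=0 \bmod (1+x^p)$ for $f\in C_p$ and the identity $h(x)h(x)=h(x) \bmod (1+x^p)$; you instead prove injectivity from the coprimality $\gcd(1+x,h(x))=1$ (since $h(1)=1$) together with $(1+x)\mid s(x)$ for $s\in C_p$, and conclude bijectivity from the dimension count $\dim_{\mathbb{F}_2} C_p = p-1 = \deg h(x)$; your CRT observation $R_p \cong \mathbb{F}_2[x]/(1+x)\times \mathbb{F}_2[x]/(h(x))$, with $C_p$ sitting as $\{0\}\times\mathbb{F}_2[x]/(h(x))$, packages the multiplicativity and bijectivity in one stroke and also explains the paper's later remark that $C_p\cong\mathbb{F}_{2^{p-1}}$ exactly when $2$ is primitive modulo $p$. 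What your route does not deliver, and the paper's does, is the explicit inverse $\phi$ itself: the paper reuses $\phi$ operationally later (e.g.\ in the proof of the MDS sufficient condition, where inverses computed over $\mathbb{F}_2[x]/(h(x))$ are lifted back to $C_p$ entrywise via $\phi$), so the constructive verification is not wasted effort there. You do handle multiplicativity of the reduction map more carefully than the paper, which dismisses it as ``easy to check''; your observation that compatibility of the two reductions rests on $h(x)\mid (1+x^p)$ is exactly the right justification, and also quietly repairs the paper's slight gap that $\phi\circ\theta=\mathrm{id}$ alone gives only injectivity, with surjectivity needing either $\theta\circ\phi=\mathrm{id}$ or your dimension count.
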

%\begin{proof}
%We need to find an isomorphism between $C_p$ and $\mathbb{F}_2[x]/(h(x))$. Indeed, we can construct a function
%$$\theta: C_p \rightarrow \mathbb{F}_2[x]/(h(x))$$
%by defining
%\[
% \theta(f(x)) \triangleq f(x) \bmod h(x).
%\]
%It is easy to check that $\theta$ is a ring homomorphism.
%The mapping $\theta$ is a bijection, because it has an inverse function $\phi(g(x))$  given by
%\[
%\phi(g(x)) \triangleq (g(x) \cdot e(x)) \bmod (1+x^p).
%\]
%It can be seen that the composition $\phi \circ \theta$ is the identity mapping of~$C_p$. Therefore, $C_p$ is isomorphic to $\mathbb{F}_2[x]/(h(x))$.
%\end{proof}
\begin{proof}
We need to find an isomorphism between $C_p$ and $\mathbb{F}_2[x]/(h(x))$. Indeed, we can construct a function
$$\theta: C_p \rightarrow \mathbb{F}_2[x]/(h(x))$$
by defining
\[
 \theta(f(x)) \triangleq f(x) \bmod h(x).
\]
It is easy to check that $\theta$ is a ring homomorphism.
 Let us define $\phi(g(x))$  as
\[
\phi(g(x)) \triangleq (g(x) \cdot e(x)) \bmod (1+x^p).
\]
Next we prove that $\phi$ is an inverse function of $\theta$. For any polynomial $f(x)\in C_p$, if $\deg (f(x))<p-1$, then we have
\[
\phi(\theta(f(x)))=\phi(f(x))=f(x) \cdot e(x)=f(x) \bmod (1+x^p).
\]
Before we consider the case $\deg (f(x))=p-1$, we prove the following fact:
\begin{equation}
\label{eq:h2}
h(x)h(x)=h(x) \bmod (1+x^p).	
\end{equation}
Note that $h(x)$ can be reformulated as
$$h(x)=[1+x(1+x)+x^3(1+x)+\cdots+x^{p-2}(1+x)].$$
Hence
$$h(x)h(x)=h(x)+x(1+x)h(x)+x^3(1+x)h(x)+\cdots+x^{p-2}(1+x)h(x)=h(x) \bmod (1+x^p).$$
If $\deg (f(x))=p-1$, we have
\begin{eqnarray*}
\phi(\theta(f(x)))&=&\phi(f(x)+h(x))\\
&=&(f(x)+h(x)) \cdot (1+h(x))\\
&=&f(x)+f(x)h(x)+h(x)+h(x)h(x)\\
&=&f(x)+h(x)+h(x)h(x)\mbox{ (by the fact }f(x)h(x)=0 \bmod (1+x^p))\\
&=&f(x)+h(x)+h(x)\mbox{ (by \eqref{eq:h2})}\\
&=&f(x) \bmod (1+x^p).
\end{eqnarray*}
The composition $\phi \circ \theta$ is thus the identity mapping of~$C_p$ and  the mapping $\theta$ is a bijection. Therefore, $C_p$ is isomorphic to $\mathbb{F}_2[x]/(h(x))$.
\end{proof}
Note that $C_p$ is isomorphic to a finite field $\mathbb{F}_{2^{p-1}}$ if and only if 2 is a primitive element in $\mathbb{F}_p$~\cite{Fenn1997Bit}.
For example, when $p=5$, $C_5$ is isomorphic to a finite field $\mathbb{F}_{2^{4}}$ and the element $1+x^4$ in $C_p$ is mapped to
\[1+x^4 \bmod h(x)= x+x^2+x^3.
\]
If we apply the function $\phi$ to $x+x^2+x^3$, we can recover
\begin{align*}
\phi (x+x^2+x^3) &= (x+x^2+x^3)(x+x^2+x^3+x^4) \\
&= 1+x^4 \bmod (1+x^5).
\end{align*}

%We recall that in a general ring $R$ with identity, there exist the identity $e$ such that $ue=eu=u$, $\forall u\in R$. The identity element of $C_p$ is
%\[ e(x)\triangleq 1+h(x) = x + x^2 + \cdots + x^{p-1}. \]
A polynomial $f(x)\in C_p$ is called \emph{invertible} if we can find a polynomial $\bar{f}(x)\in C_p$ such that $f(x)\bar{f}(x)$ is equal to the identity polynomial $e(x)$. The polynomial $\bar{f}(x)$ is called inverse of $f(x)$. It can be shown that the inverse  is unique in $C_p$.

%We need the following definition about pseudo-inverse before introducing the construction of Cauchy array codes.
%
%\begin{defn}
%A polynomial $f(x)\in R_p$ is called \emph{pseudo-invertible} if we can find a polynomial $\bar{f}(x)\in R_p$ such that $f(x)\bar{f}(x)$ is equal to $e(x)$. The polynomial $\bar{f}(x)$ is called pseudo-inverse of $f(x)$.
%\end{defn}

The next lemma demonstrates that the polynomial $x^t+x^{t+b}$ is invertible.

\begin{lemma}
Let $p>2$ be a prime number, there exists a polynomial $a(x)\in C_p$ such that
\begin{equation}
a(x)(x^t+x^{t+b})= e(x) \mod (1+x^p),
\label{inverse}
\end{equation}
where $t, b\ge 0$ and $1\le t+b<p$.
\label{lm:inverse}
\end{lemma}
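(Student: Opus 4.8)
The plan is to transfer the question through the isomorphism of Lemma~\ref{lm:ismp} and thereby reduce the invertibility of $x^t+x^{t+b}$ in $C_p$ to a greatest‑common‑divisor computation in $\mathbb{F}_2[x]$. We may assume $b\ge 1$, since for $b=0$ the polynomial $x^t+x^{t+b}$ vanishes; then $x^t+x^{t+b}$ has exactly two nonzero terms, so it lies in $C_p$, and $\deg(x^t+x^{t+b})=t+b<p$. Because $\theta:C_p\to\mathbb{F}_2[x]/(h(x))$ of Lemma~\ref{lm:ismp} is a ring isomorphism, $x^t+x^{t+b}$ is invertible in $C_p$ if and only if $\theta(x^t+x^{t+b})=(x^t+x^{t+b})\bmod h(x)$ is a unit in $\mathbb{F}_2[x]/(h(x))$, which in turn holds if and only if $\gcd\!\bigl(x^t+x^{t+b},\,h(x)\bigr)=1$. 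Once this gcd is shown to be $1$, the inverse can even be written down explicitly: one computes the inverse $g(x)$ of $(x^t+x^{t+b})\bmod h(x)$ in $\mathbb{F}_2[x]/(h(x))$ by the extended Euclidean algorithm, and then $a(x)=\phi(g(x))=\bigl(g(x)\,e(x)\bigr)\bmod(1+x^p)\in C_p$ satisfies \eqref{inverse}.

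To establish $\gcd\!\bigl(x^t+x^{t+b},h(x)\bigr)=1$ I would first peel off the factor $x^t$. Since $h(0)=1\ne 0$, the polynomials $x$ and $h(x)$ are coprime, so $\gcd\!\bigl(x^t+x^{t+b},h(x)\bigr)=\gcd\!\bigl(x^t(1+x^b),h(x)\bigr)=\gcd\!\bigl(1+x^b,h(x)\bigr)$, and it remains to show $\gcd(1+x^b,h(x))=1$. This is exactly the point where the primality of $p$ is used. From $1\le b\le t+b<p$ and $p$ prime we get $\gcd(b,p)=1$, and the standard identity $\gcd(x^m+1,x^n+1)=x^{\gcd(m,n)}+1$ over $\mathbb{F}_2$ gives $\gcd(1+x^b,1+x^p)=1+x$. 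Now $1+x^p=(1+x)h(x)$, so any common divisor of $1+x^b$ and $h(x)$ is also a common divisor of $1+x^b$ and $1+x^p$, hence divides $1+x$; but $h(1)=1$ since $p$ is odd, so $1+x$ does not divide $h(x)$, and therefore the only monic common divisor of $1+x^b$ and $h(x)$ is $1$.

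The only real obstacle is this coprimality $\gcd(1+x^b,h(x))=1$, i.e.\ seeing that $h(x)$ and $1+x^b$ share no root; everything else is bookkeeping. If one prefers to argue over an algebraic closure of $\mathbb{F}_2$ rather than with the gcd identity, the same conclusion is immediate: the roots of $h(x)$ are precisely the nontrivial $p$‑th roots of unity, each of multiplicative order exactly $p$ (because $1+x^p=(1+x)h(x)$ and $p$ is prime), whereas every root of $1+x^b$ has multiplicative order dividing $b<p$, so the two polynomials have disjoint root sets. Either way we obtain that $(x^t+x^{t+b})\bmod h(x)$ is a unit, and pulling back through $\phi=\theta^{-1}$ produces the desired $a(x)\in C_p$.
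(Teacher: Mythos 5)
Your proof is correct, but it follows a genuinely different route from the paper's. The paper argues constructively: it multiplies $x^t(1+x^b)$ by the explicit candidate $x^{p-t}(1+x^{2b}+x^{4b}+\cdots+x^{(p-1)b})$ in $R_p$ and uses the fact that $\ell b\not\equiv 0 \pmod p$ for $1\le \ell\le p-1$ (this is where primality enters there) to see that the product telescopes to $e(x)$; if the candidate has odd weight, $h(x)$ is added, which is harmless by \eqref{zero}. You instead transport the question through the isomorphism $\theta$ of Lemma~\ref{lm:ismp} and reduce invertibility in $C_p$ to the coprimality $\gcd\bigl(x^t+x^{t+b},h(x)\bigr)=1$ in $\mathbb{F}_2[x]$, settled via $\gcd(1+x^b,1+x^p)=1+x$ and $h(1)=1$; this is sound (Lemma~\ref{lm:ismp} precedes the present lemma, so there is no circularity), it isolates cleanly where the primality of $p$ is used, and it even flags the degenerate case $b=0$, which the statement formally permits but which the paper silently excludes. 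What your argument gives up is the closed-form inverse $x^{p-t}(1+x^{2b}+\cdots+x^{(p-1)b})$ (possibly plus $h(x)$): the paper reuses exactly this representation later, e.g.\ in the proof of Lemma~\ref{lm:inverse2} via \eqref{eq:division-2} to justify the cheap division procedure of Algorithm~\ref{alg:division}, whereas your inverse is obtained only implicitly (through the extended Euclidean algorithm and the pull-back $\phi$), so the downstream complexity analysis would need a small additional step (multiplying \eqref{division} through by $x^t+x^{t+b}$) rather than quoting the explicit formula.
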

\begin{proof}
We can check that, in ring $R_p$,
\begin{align*}
&[x^t(1+x^b)][x^{p-t}(1+x^{2b}+x^{4b}+\cdots + x^{(p-1)b})] \bmod (1+x^p)\\
&=x^p[(1+x^b)(1+x^{2b}+x^{4b}+\cdots + x^{(p-1)b})] \bmod (1+x^p)\\
&=1+x^b+x^{2b}+x^{3b}+x^{4b}+\cdots +x^{(p-2)b}+x^{(p-1)b}+1 \bmod (1+x^p) \\
&=1+x+x^2+x^3+x^4+\cdots+x^{p-2}+x^{p-1}+1 \bmod (1+x^p)\\
&= e(x) \bmod (1+x^p).
\end{align*}
The second last equality follows from the fact that $\ell b\neq 0 \mod p$ for $(b,p)=1$ and $1\leq \ell \leq p-1$.
%, in the finite field $\mathbb{F}_p$, multiplication by $b$ is an injective function mapping non-zero elements to non-zero elements.
If $x^{p-t}(1+x^{2b}+x^{4b}+\cdots + x^{(p-1)b})\in C_p$, then the inverse of $x^t+x^{t+b}$ is $x^{p-t}(1+x^{2b}+x^{4b}+\cdots + x^{(p-1)b})$; Otherwise, the inverse of $x^t+x^{t+b}$ is $x^{p-t}(1+x^{2b}+x^{4b}+\cdots + x^{(p-1)b})+h(x)$ due to \eqref{zero}.
%Since $a(x)$ has even number of nonzero terms, it is clear that $a(x)\in C_p$.
\end{proof}

%Note that each element of the form $1+x^b$ for $1\leq b < p$, has a multiplication inverse in $C_p$. It follows from the fact that $\gcd(1+x^b,h(x))=1$, when $p$ is an odd number. Therefore, there exist a polynomial $a(x)$ in $C_p$ such that the following equation holds.
%\begin{equation}
%a(x)(1+x^b)\equiv 1+h(x) \mod (1+x^p).
%\label{inverse}
%\end{equation}

%the polynomials $1-x^b$ and $h(x)$ are relatively-prime over $\mathbb{F}_2$. relatively-prime

%From \eqref{inverse},  $x^t+x^{t+b}$ is pseudo-invertible. If the pseudo-inverse of $x^t+x^{t+b}$ is $a(x)$, then polynomial $a(x)+h(x)$ is also a pseudo-inverse of $x^t+x^{t+b}$. In the following of the paper, we represent the pseudo-inverse of $x^t+x^{t+b}$ as $1/(x^t+x^{t+b})$. When we write $1/(x^t+x^{t+b})=a(x)$, it means that $a(x)$ is equal to one polynomial of the pseudo-inverse $1/(x^t+x^{t+b})$. The following we present some properties of the pseudo-inverses.

In the following of the paper, we represent the inverse of $x^t+x^{t+b}$ as $1/(x^t+x^{t+b})$. The following we present some properties of the inverses.

\begin{lemma}
Let $a,b,c,d$ be integers between 0 and $p-1$ such that $a\neq b$ and $c\neq d$. For two polynomials $s_1(x),s_2(x)\in R_p$, the following equations hold:
\begin{equation}
\frac{1}{x^a+x^b}\cdot \frac{1}{x^c+x^d}=\frac{1}{(x^a+x^b)(x^c+x^d)},
\label{eq:1}
\end{equation}
\begin{equation}
\frac{s_1(x)}{x^a+x^b}+ \frac{s_2(x)}{x^a+x^b}=\frac{s_1(x)+s_2(x)}{x^a+x^b},
\label{eq:2}
\end{equation}
and
\begin{equation}
\frac{1}{x^a+x^b}+ \frac{1}{x^c+x^d}=\frac{x^a+x^b+x^c+x^d}{(x^a+x^b)(x^c+x^d)}.
\label{eq:3}
\end{equation}
\end{lemma}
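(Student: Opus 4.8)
The plan is to reduce all three identities to the ring axioms of $R_p$ together with two facts already available: that $x^a+x^b$ is invertible in $C_p$ whenever $0\le a,b\le p-1$ and $a\ne b$ (this is Lemma~\ref{lm:inverse}, applied with $t=\min\{a,b\}$ and shift $|a-b|$, since $\min\{a,b\}+|a-b|=\max\{a,b\}$ satisfies $1\le\max\{a,b\}\le p-1$), and that such an inverse is unique in $C_p$. Throughout, I would write $u=x^a+x^b$ and $v=x^c+x^d$, and let $\bar u=1/u$, $\bar v=1/v$ denote their inverses in $C_p$, so that $u\bar u=v\bar v=e(x)$, where $e(x)$ is the identity element of $C_p$. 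In particular $e(x)g(x)=g(x)$ for every $g(x)\in C_p$, and hence $e(x)^2=e(x)$; also recall that $C_p$ is closed under multiplication (indeed it is the ideal $(1+x)R_p$ of $R_p$), so products of elements of $C_p$, and the product of an element of $R_p$ with an element of $C_p$, stay in $C_p$.

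First I would establish \eqref{eq:1}. Both $u$ and $v$ have exactly two nonzero coefficients, so $u,v\in C_p$, whence $uv\in C_p$ and $\bar u\bar v\in C_p$. Using commutativity and associativity of multiplication in $R_p$,
\[
(uv)(\bar u\bar v)=(u\bar u)(v\bar v)=e(x)\,e(x)=e(x).
\]
Thus $uv$ is invertible in $C_p$ — so the denominator on the right-hand side of \eqref{eq:1} genuinely makes sense — and, by uniqueness of the inverse, $1/(uv)=\bar u\bar v$, which is \eqref{eq:1}.

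Identity \eqref{eq:2} is immediate from the distributive law in $R_p$: $s_1(x)\bar u+s_2(x)\bar u=(s_1(x)+s_2(x))\bar u$. For \eqref{eq:3}, I would recall that the symbol $s(x)/w(x)$ stands for $s(x)$ multiplied by the (unique) $C_p$-inverse of $w(x)$, and then expand using \eqref{eq:1} to rewrite $1/(uv)=\bar u\bar v$:
\[
\frac{x^a+x^b+x^c+x^d}{(x^a+x^b)(x^c+x^d)}=(u+v)\bar u\bar v=u\bar u\bar v+v\bar u\bar v=e(x)\bar v+e(x)\bar u=\bar v+\bar u,
\]
where the last step uses $\bar u,\bar v\in C_p$ and that $e(x)$ acts as the identity on $C_p$. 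The right-hand side is exactly $\frac{1}{x^a+x^b}+\frac{1}{x^c+x^d}$, proving \eqref{eq:3}.

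I do not anticipate a genuine obstacle; the computation is short once the bookkeeping is in place. The one point deserving attention is well-definedness: the fractions with composite denominator $(x^a+x^b)(x^c+x^d)$ in \eqref{eq:1} and \eqref{eq:3} only make sense once that product is known to be a unit of $C_p$, and this is precisely what the identity $(uv)(\bar u\bar v)=e(x)$ certifies. Everything else is an application of commutativity, associativity, distributivity, and the idempotence $e(x)^2=e(x)$ of the ring identity.
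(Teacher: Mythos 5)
Your proposal is correct and follows essentially the same route as the paper: establish \eqref{eq:1} by showing $(x^a+x^b)(x^c+x^d)$ times the product of the two inverses equals $e(x)$ and invoking uniqueness of inverses in $C_p$, get \eqref{eq:2} from distributivity, and derive \eqref{eq:3} by expanding the right-hand side. The only difference is cosmetic: you distribute $(u+v)\bar u\bar v$ directly, while the paper reaches the same identity by chaining \eqref{eq:1} and \eqref{eq:2}.
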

\begin{proof}
Let $p(x)$ and $q(x)$ be inverses of $x^a+x^b$ and $x^c+x^d$, respectively. That is,  $(x^a+x^b)p(x)=e(x)\bmod {(1+x^p)}$ and $(x^c+x^d)q(x)=e(x)\bmod {(1+x^p)}$. Thus we have
\[
(x^a+x^b)(x^c+x^d)p(x)q(x)=e(x)e(x)=e(x)\bmod {(1+x^p)}.
\]
Note that, $(x^a+x^b)(x^c+x^d)\in C_p$ and $p(x)q(x)\in C_p$. By definition, we have
\[
\frac{1}{(x^a+x^b)(x^c+x^d)}=p(x)q(x).
\]
Therefore, \eqref{eq:1} holds.

\eqref{eq:2} follows from
\[
p(x)s_1(x)+p(x)s_2(x)=p(x)(s_1(x)+s_2(x)).
\]

The right side of equation in \eqref{eq:3} is
\begin{align*}
\frac{(x^a+x^b)+(x^c+x^d)}{(x^a+x^b)(x^c+x^d)}&=\frac{(x^a+x^b)+(x^c+x^d)}{(x^a+x^b)}\cdot \frac{1}{(x^c+x^d)} \text{ (by \eqref{eq:1})} \\
&=(e(x)+\frac{x^c+x^d}{x^a+x^b})\cdot \frac{1}{(x^c+x^d)} \text{ (by \eqref{eq:2})} \\
&=\frac{1}{x^a+x^b}+ \frac{1}{x^c+x^d}.
\end{align*}
\end{proof}
For a square matrix in $C_p$, we define the inverse matrix as follows.
\begin{defn}
An $\ell\times \ell$ matrix $\mathcal{M}_{\ell\times\ell}$ is called \emph{invertible} if we can find an $\ell\times\ell$ matrix $\mathcal{M}^{-1}_{\ell\times\ell}$ such that $\mathcal{M}_{\ell\times\ell}\cdot \mathcal{M}^{-1}_{\ell\times\ell}=\mathcal{I}_{\ell}$, where $\mathcal{I}_{\ell}$ is the $\ell\times \ell$ identity matrix
\begin{align}
\label{eq:identity}
\mathcal{I}_{\ell}\triangleq\begin{bmatrix}e(x)&0&\cdots & 0\\
0&e(x)&\cdots & 0\\
\vdots & \vdots & \ddots & \vdots\\
0&0&\cdots &e(x)\\
\end{bmatrix}.
\end{align}
The matrix $\mathcal{M}^{-1}_{\ell\times\ell}$ is called the inverse matrix of $\mathcal{M}_{\ell\times\ell}$.
\end{defn}

\subsection{Construction of Cauchy Array Codes}
\label{subsec:construction}
In this subsection, we define a $(p-1)\times(k+r)$ array code, called $\mathcal{C}(k,r,p)$, where $p>2$ is a prime number and $p\geq k+r$. We index the columns by $\{0,1,\ldots, k+r-1\}$, and the rows by $\{0,1,\ldots, p-2\}$.
The columns are identified with the disks. Columns 0 to $k-1$ are called the \emph{information columns}, which store the information bits. Columns $k$ to $k+r-1$ are called the \emph{parity columns}, which store the redundant bits.

For $i=0,1,\ldots,p-2$ and $j=0,1,\ldots,k-1$, let $i$-th information bit in  $j$-th information column be denoted by $s_{i,j}$. For each $p-1$ information bits $s_{0,j},s_{1,j},\ldots,s_{p-2,j}$ stored in $j$-th information column,  one extra \emph{parity-check bit} $s_{p-1,j}$ is computed as
\begin{equation}\label{eq:parity-check-sj}
s_{p-1,j}\triangleq s_{0,j}+s_{1,j}+\cdots+s_{p-2,j}.
\end{equation}
Define \emph{data polynomial} for $j$-th information column as
\begin{equation}
s_j(x)\triangleq s_{0,j}+s_{1,j}x+\cdots+s_{p-2,j}x^{p-2}+s_{p-1,j}x^{p-1}.
\label{eq:sj}
\end{equation}
Note that the extra parity-check bit is not stored, and can be computed  when necessary. It is easy to see that each data polynomial is an element in $C_p$.

%For $i=0,1,\ldots,p-2$ and $j=0,1,\ldots,r-1$, let $i$-th redundant bit stored in $j$-th parity column be denoted by $c_{i,j}$. Similarly, one extra parity-check bit $c_{p-1,j}$ is computed as
%\[
%c_{p-1,j}\triangleq c_{0,j}+c_{1,j}+\cdots+c_{p-2,j}.
%\]
%Define \emph{coded polynomial} for $j$-th parity column as
%\begin{equation}
%c_{j}(x)\triangleq c_{0,j}+c_{1,j}x+\cdots+c_{p-2,j}x^{p-2}+c_{p-1,j}x^{p-1}.
%\label{eq:cj}
%\end{equation}

Next we present the method to compute the encoded symbols in parity columns. For $i=0,1,\ldots,p-2$ and $j=0,1,\ldots,r-1$, let $i$-th redundant bit stored in $j$-th parity column be denoted by $c_{i,j}$. Define \emph{coded polynomial} for $j$-th parity column as
\begin{equation}
c_{j}(x)\triangleq c_{0,j}+c_{1,j}x+\cdots+c_{p-2,j}x^{p-2}+c_{p-1,j}x^{p-1}.
\label{eq:cj}
\end{equation}
%It will be clear later that the last coefficient $c_{p-1,j}$ is equal to 0.
It will be clear later that
\begin{equation}
c_{p-1,j}=c_{0,j}+c_{1,j}+\cdots+c_{p-2,j}.\label{encoded-sum}
\end{equation}

The coded polynomial can be generated by
\begin{equation}
\begin{bmatrix}
c_0(x) & c_1(x) & \cdots & c_{r-1}(x)
\end{bmatrix}\triangleq
\begin{bmatrix}
s_0(x) & s_1(x) & \cdots & s_{k-1}(x)
\end{bmatrix}\cdot \mathbf{C}_{k\times r},
 \label{eq:en}
  \end{equation}
where
  \begin{equation}
  \mathbf{C}_{k\times r} \triangleq
 \begin{bmatrix}
\frac{1}{1+x^r} & \frac{1}{x+x^{r}} &  \cdots & \frac{1}{x^{r-1}+x^{r}} \\
\frac{1}{1+x^{r+1}} & \frac{1}{x+x^{r+1}} &  \cdots & \frac{1}{x^{r-1}+x^{r+1}} \\
\vdots & \vdots & \ddots& \vdots \\
\frac{1}{1+x^{k+r-1}} & \frac{1}{x+x^{k+r-1}} &  \cdots & \frac{1}{x^{r-1}+x^{k+r-1}}
\end{bmatrix}
 \label{eq:C}
  \end{equation}
is a $k\times r$ rectangular Cauchy matrix over $C_p$. Note that each entry of the matrix in \eqref{eq:C} is the inverse of $x^t+x^{t+b}$ and all arithmetic operations in \eqref{eq:en} are performed in ring $C_p$. The coded polynomials $c_j(x)$ in~\eqref{eq:en}, for $0\le j\le r-1$, are in $C_p$. Hence, the $k\times (k+r)$ generator matrix $\mathbf{G}_{k\times (k+r)}$ of the codewords
\[s_0(x),s_1(x),\ldots,s_{k-1}(x),c_0(x),c_1(x),\cdots,c_{r-1}(x)\]
is given by
\[\mathbf{G}_{k\times (k+r)}=\begin{bmatrix}\mathcal{I}_{k}& |& \mathbf{C}_{k\times r}\\ \end{bmatrix},\]
where $\mathcal{I}_{k}$ is the $k\times k$  matrix given in~\eqref{eq:identity}. Note that all entries in $\mathbf{G}_{k\times (k+r)}$ are in $C_p$.

%Note that each entry of the Cauchy matrix with the form $\frac{1}{x^t+x^{t+b}}$ represents the pseudo-inverse of the polynomial $x^t+x^{t+b}$. There are two pseudo-inverses of $x^t+x^{t+b}$, and the difference of two pseudo-inverse is $h(x)$.

%\begin{lemma}
%Given the equation $\frac{s(x)}{1+x^b}=c(x)$, where $b$ is a positive integer that is less than $p$, and $s(x)\in \mathcal{C}_p$, $c(x)\in \mathcal{R}_p$. We can iteratively compute the coefficients of $c(x)$ as follows
%\begin{align*}
%&c_0=s_0 \\
%&c_{b}=s_{b}+c_{0} \\
%&c_{2b}=s_{2b}+c_{b} \\
%&\text{ }\text{ }\text{ }\text{ }\text{ }\vdots \\
%&c_{(p-2)b}=s_{(p-2)b}+c_{(p-3)b}\\
%&c_{(p-1)b}=0,
%\end{align*}
%where $s(x)=\sum_{i=0}^{p-1}s_ix^i$ and $c(x)=\sum_{i=0}^{p-1}c_ix^i$.
%\label{lm:inverse}
%\end{lemma}
%\begin{proof}
%We can check that in the ring $\mathcal{R}_p$,
%\begin{align*}
%&c(x)(1+x^b)\\
%=&(s_0+(s_{0}+s_{b})x^b+(s_{0}+s_{b}+s_{2b})x^{2b}+\cdots+(s_{0}+s_{b}+\cdots+s_{(p-2)b})x^{(p-2)b})(1+x^b)\\
%=&s_0+s_bx^b+s_{2b}x^{2b}+\cdots+s_{(p-2)b}x^{(p-2)b}+(s_{0}+s_{b}+\cdots+s_{(p-2)b})x^{(p-1)b} \\
%=&s_0+s_1x+s_{2}x^{2}+\cdots+s_{p-2}x^{p-2}+(s_{0}+s_{1}+\cdots+s_{p-2})x^{p-1},
%\end{align*}
%which is precisely the polynomial $s(x)$ in $\mathcal{C}_p$. In the equations above, the indices are taken modulo $p$.
%The last equality follows from the fact that $\ell b \neq 0 \mod p$ for $(b,p)=1$ and $1\leq \ell \leq p-1$.
%\end{proof}

The above encoding procedure can be summarized as three steps: (i) given $k(p-1)$ information bits, append $k$ extra
parity-check bits as given in \eqref{eq:parity-check-sj} and obtain the $k$ polynomials $$s_0(x), s_1(x),\cdots,s_{k-1}(x);$$ (ii)
generate $r$ coded polynomials as given in~\eqref{eq:en}; (iii) ignore the terms with degree $p-1$ of the coded
polynomials and store the coefficients of the terms in the coded polynomials of degrees from 0 to $p-2$.

Before we present a fast decoding algorithm to generate codewords, we first present the MDS property of the proposed array codes.

\section{The MDS Property}
\label{sec:MDS}

A $(p-1)\times n$ array code that encodes $k(p-1)$ information bits is said to be an {\em MDS array code} if the $k(p-1)$ information bits can be recovered by downloading  any $k$ columns.\footnote{In total, one needs to download $k\times (p-1)$ bits.} In this section, we are going to prove that the array code constructed in the last section satisfies the MDS property for $k+r\leq p$.

Next lemma shows a sufficient MDS property condition of the array code $\mathcal{C}(k,r,p)$.

\begin{lemma}
If any $k\times k$ sub-matrix of the generator matrix $\mathbf{G}_{k\times (k+r)}$, after reduction modulo $h(x)$, is a nonsingular matrix over $\mathbb{F}_2[x]/(h(x))$, then the array code $\mathcal{C}(k,r,p)$ satisfies the MDS property.
\label{thm:MDS1}
\end{lemma}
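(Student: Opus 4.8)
The plan is to reduce the MDS question for the array code $\mathcal{C}(k,r,p)$ to a statement about the rank of submatrices of $\mathbf{G}_{k\times(k+r)}$ over the ring $C_p$, and then to exploit the isomorphism $C_p\cong\mathbb{F}_2[x]/(h(x))$ established in Lemma~\ref{lm:ismp}. First I would recall that the array code is MDS exactly when the $k(p-1)$ information bits can be recovered from any $k$ surviving columns. Each column corresponds to a data polynomial $s_j(x)\in C_p$ (after appending the parity-check bit), and the codeword vector $(s_0(x),\ldots,s_{k-1}(x),c_0(x),\ldots,c_{r-1}(x))$ equals $(s_0(x),\ldots,s_{k-1}(x))\cdot\mathbf{G}_{k\times(k+r)}$ with all arithmetic in $C_p$. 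Since the map from a column's stored bits to its data polynomial in $C_p$ is a bijection (the parity-check bit is determined, and conversely the coefficient of $x^{p-1}$ is recovered as the sum of the other coefficients), recovering the $k$ information polynomials over $C_p$ is equivalent to recovering the $k(p-1)$ information bits.

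Next I would argue that if $S\subseteq\{0,1,\ldots,k+r-1\}$ is any set of $k$ column indices and $\mathbf{G}_S$ denotes the corresponding $k\times k$ submatrix of $\mathbf{G}_{k\times(k+r)}$, then the $k$ polynomials in those columns are $(s_0(x),\ldots,s_{k-1}(x))\cdot\mathbf{G}_S$. If $\mathbf{G}_S$ is invertible over $C_p$ (in the sense of the Definition preceding this lemma, i.e. it has a matrix inverse with entries in $C_p$ and with $\mathcal{I}_k$ as the identity), then the information polynomials are recovered by multiplying on the right by $\mathbf{G}_S^{-1}$; hence the code is MDS. So it suffices to show: every $k\times k$ submatrix $\mathbf{G}_S$ is invertible over $C_p$. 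Now I would invoke the isomorphism $\theta:C_p\to\mathbb{F}_2[x]/(h(x))$ from Lemma~\ref{lm:ismp}, applied entrywise to matrices. Because $\theta$ is a ring isomorphism, it sends $\mathcal{I}_k$ (whose diagonal entries are $e(x)$, the identity of $C_p$) to the ordinary $k\times k$ identity over $\mathbb{F}_2[x]/(h(x))$, and it is compatible with matrix multiplication. Therefore $\mathbf{G}_S$ is invertible over $C_p$ if and only if its entrywise reduction $\theta(\mathbf{G}_S)$ — which is exactly $\mathbf{G}_S$ reduced modulo $h(x)$ — is invertible, i.e. nonsingular, over the ring $\mathbb{F}_2[x]/(h(x))$. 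This is precisely the hypothesis of the lemma, and the conclusion follows.

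The main point requiring care is the passage between "invertible matrix over the ring $C_p$" and "the array code can reconstruct the information." One subtlety is that $C_p$ (equivalently $\mathbb{F}_2[x]/(h(x))$) need not be a field when $2$ is not primitive mod $p$, so "nonsingular over the ring" must mean "possessing a two-sided inverse with entries in the ring," not merely "having a determinant that is a nonzero ring element"; I would be explicit that the Definition of invertible matrix in the excerpt is the one being used and that $\theta$ being a ring isomorphism transports this notion faithfully in both directions. A second small point is to confirm that the identity of $C_p$ under $\theta$ matches the constant $1$ of $\mathbb{F}_2[x]/(h(x))$, so that $\theta(\mathcal{I}_k)$ is the standard identity matrix and "nonsingular over $\mathbb{F}_2[x]/(h(x))$" has its usual meaning; this is immediate since $\theta(e(x)) = e(x)\bmod h(x) = 1$. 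Everything else is bookkeeping: entrywise application of a ring homomorphism commutes with matrix products and preserves identities, so no genuinely hard estimate is involved — the lemma is essentially a translation statement, and the real work (showing those reduced submatrices are actually nonsingular, using the Cauchy structure) is deferred to later results.
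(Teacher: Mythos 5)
Your proposal is correct and follows essentially the same route as the paper: invoke the isomorphism of Lemma~\ref{lm:ismp} entrywise (the paper applies $\phi$ to each entry of $\bar{\mathbf{A}}^{-1}$ to pull the inverse back to $C_p$, exactly your transport of invertibility) and conclude that any $k\times k$ submatrix is invertible over $C_p$, hence the information can be recovered from any $k$ columns. Your extra remarks on the bit-to-polynomial bijection and on $\theta(e(x))=1$ are details the paper leaves implicit, but they do not change the argument.
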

\begin{proof}
Recall that, according to Lemma \ref{lm:ismp}, $C_p$ is isomorphic to ring $\mathbb{F}_2[x]/(h(x))$.  Let $\mathbf{A}$ be a $k\times k$ sub-matrix of the generator matrix $\mathbf{G}_{k\times (k+r)}$, and $\bar{\mathbf{A}}$ be the matrix obtained by reducing each entry of $\mathbf{A}$ mod $h(x)$. Matrix $\bar{\mathbf{A}}$ can be regarded as a matrix over  $\mathbb{F}_2[x]/(h(x))$. Since  $\bar{\mathbf{A}}$ is nonsingular over $\mathbb{F}_2[x]/(h(x))$, we can find the inverse of $\bar{\mathbf{A}}$. Let $\bar{\mathbf{A}}^{-1}$ be the inverse of $\mathbf{A}$ as a matrix over  $\mathbb{F}_2[x]/(h(x))$, then we can compute the inverse matrix of $\mathbf{A}$ over $C_p$ by applying the inverse function $\phi$ for each entry of $\bar{\mathbf{A}}^{-1}$. Therefore, the array code $\mathcal{C}(k,r,p)$ satisfies the MDS property.
\end{proof}
With Lemma \ref{thm:MDS1}, we have that if the determinant of the $k\times k$ sub-matrix $\mathbf{A}$ of $\mathbf{G}_{k\times (k+r)}$ is invertible, then we can find a matrix $\mathbf{A}^{-1}$ such that $\mathbf{A}^{-1}\mathbf{A}=\mathcal{I}_{k}$, i.e., $\mathbf{A}^{-1}$ is the inverse matrix of $\mathbf{A}$.
We need the following result about the Cauchy determinant in ring $C_p$ before giving a characterization of the MDS property in terms of determinants.

\begin{lemma}
Let $x^{a_1},x^{a_2},\ldots,x^{a_\ell},x^{b_1},x^{b_2},\ldots,x^{b_\ell}$ be $2\ell$ distinct monomials, where $0\leq a_i,b_i<p$ for $i=1,2,\ldots,\ell$ and $p$ is a prime number.
%If the polynomial $x^i+x^{j}$ is invertible in $\mathbb{F}_2[x]/(h(x))$ for $0\leq i<j<p$, then
The determinant of the Cauchy matrix
\begin{equation}
\mathbf{C}(x^{a_{1:\ell}},x^{b_{1:\ell}}) \triangleq
\begin{bmatrix}
\frac{1}{x^{a_1}+x^{b_1}} & \frac{1}{x^{a_1}+x^{b_2}} &  \cdots & \frac{1}{x^{a_1}+x^{b_\ell}} \\
\frac{1}{x^{a_2}+x^{b_1}} & \frac{1}{x^{a_2}+x^{b_2}} &  \cdots & \frac{1}{x^{a_2}+x^{b_\ell}} \\
\vdots & \vdots & \ddots& \vdots \\
\frac{1}{x^{a_\ell}+x^{b_1}} & \frac{1}{x^{a_\ell}+x^{b_2}} &  \cdots & \frac{1}{x^{a_\ell}+x^{b_\ell}}
\end{bmatrix}
\label{eq:cauchy}
\end{equation}
over $C_p$ is
\begin{equation}
D_{\ell}(x)=\frac{\prod_{\ell\geq j>i\geq 1}(x^{a_j}+x^{a_i})(x^{b_i}+x^{b_j})}{\prod_{\ell\geq j,i\geq 1}(x^{a_i}+x^{b_j})}.
\label{eq:detC}
\end{equation}
\label{lm:detC}
\end{lemma}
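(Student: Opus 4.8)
The plan is to establish the classical Cauchy determinant formula, adapted to the ring $C_p$. The standard proof over a field proceeds by row and column operations together with an induction on $\ell$, and since $C_p$ is isomorphic to $\mathbb{F}_2[x]/(h(x))$ by Lemma~\ref{lm:ismp}, and all the relevant monomial differences $x^{a_i}+x^{b_j}$ and $x^{a_i}+x^{a_j}$ are invertible in $C_p$ by Lemma~\ref{lm:inverse}, essentially the same argument goes through. First I would check the base case $\ell=1$, where the claim reads $D_1(x)=1/(x^{a_1}+x^{b_1})$, which is immediate. For the inductive step, I would subtract the last row of $\mathbf{C}(x^{a_{1:\ell}},x^{b_{1:\ell}})$ from every other row. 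Using the identity~\eqref{eq:3} (together with~\eqref{eq:1} and~\eqref{eq:2}), the $(i,j)$ entry for $i<\ell$ becomes
\[
\frac{1}{x^{a_i}+x^{b_j}}+\frac{1}{x^{a_\ell}+x^{b_j}}=\frac{x^{a_i}+x^{a_\ell}}{(x^{a_i}+x^{b_j})(x^{a_\ell}+x^{b_j})},
\]
so that from row $i$ one can factor out $x^{a_i}+x^{a_\ell}$ and from column $j$ one can factor out $1/(x^{a_\ell}+x^{b_j})$; note that factoring out an invertible element $u$ from a row means writing each entry as $u$ times something, which is legitimate here precisely because these polynomials are invertible in $C_p$.

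After these factorizations, expanding the determinant along the last row (which now has a single nonzero entry $1/(x^{a_\ell}+x^{b_\ell})$ in the last column, the other entries of that column having been used up) reduces the problem to an $(\ell-1)\times(\ell-1)$ Cauchy-type determinant whose $(i,j)$ entry is $1/((x^{a_i}+x^{b_j}))$ up to the already extracted factors; more precisely, after pulling out the common column factor $1/(x^{a_\ell}+x^{b_j})$ from each column $j<\ell$ and the row factor $x^{a_i}+x^{a_\ell}$ from each row $i<\ell$, the remaining minor is exactly $\mathbf{C}(x^{a_{1:\ell-1}},x^{b_{1:\ell-1}})$, to which the induction hypothesis applies. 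Collecting all the extracted scalar factors together with $1/(x^{a_\ell}+x^{b_\ell})$ and the determinant $D_{\ell-1}(x)$, one obtains after rearrangement exactly the product formula~\eqref{eq:detC}; this bookkeeping of which factors sit in the numerator and which in the denominator is routine but must be done carefully, keeping in mind that over $\mathbb{F}_2$ all signs are $+$, which actually simplifies the usual alternating-sign accounting.

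The main obstacle is not the combinatorics of the induction but ensuring that every manipulation is valid in the ring $C_p$ rather than a field: one must verify that each factor $x^{a_i}+x^{a_\ell}$, $x^{b_i}+x^{b_\ell}$, and $x^{a_i}+x^{b_j}$ that we divide by is indeed invertible, which follows from Lemma~\ref{lm:inverse} together with the hypothesis that the $2\ell$ monomials are distinct and $0\le a_i,b_i<p$ (so each exponent difference is a nonzero residue mod $p$, hence coprime to $p$). One must also confirm that "factoring a common invertible element out of a row/column" interacts correctly with the determinant, i.e.\ that $\det$ over the commutative ring $C_p$ is multilinear in rows, which is the standard Leibniz-formula fact valid over any commutative ring. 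Once these ring-theoretic checks are in place, the identities~\eqref{eq:1}--\eqref{eq:3} supply every algebraic step, and the induction closes.
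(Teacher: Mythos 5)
Your ring-theoretic groundwork is fine and matches what the paper relies on (invertibility of $x^i+x^j$ from Lemma~\ref{lm:inverse}, the identities \eqref{eq:1}--\eqref{eq:3}, and multilinearity of the determinant over the commutative ring $C_p$), but the elimination step at the heart of your induction does not work as described. After subtracting the last row from rows $1,\ldots,\ell-1$, the last row itself is untouched: its entries are still $1/(x^{a_\ell}+x^{b_j})$ for every $j$, all nonzero. Factoring $1/(x^{a_\ell}+x^{b_j})$ out of column $j$ turns that row into the all-ones row (times $e(x)$), not a row with a single nonzero entry, so the claimed Laplace expansion "along the last row with the single entry $1/(x^{a_\ell}+x^{b_\ell})$" is not available; the expansion produces $\ell$ minors, and the remaining matrix is not $\mathbf{C}(x^{a_{1:\ell-1}},x^{b_{1:\ell-1}})$. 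A quick sanity check exposes the same problem from the other side: your single pass only ever extracts factors of the form $x^{a_i}+x^{a_\ell}$ and $1/(x^{a_\ell}+x^{b_j})$, so iterating it could never generate the factors $x^{b_i}+x^{b_j}$ in the numerator of \eqref{eq:detC}, nor the denominators $1/(x^{a_i}+x^{b_\ell})$ for $i<\ell$.

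The fix is to do two elimination passes per induction step, one on columns and one on rows, which is exactly what the paper does: add column $1$ to columns $2,\ldots,\ell$, so that by \eqref{eq:1} and \eqref{eq:3} the $(i,j)$ entry becomes $\frac{x^{b_j}+x^{b_1}}{x^{a_i}+x^{b_1}}\cdot\frac{1}{x^{a_i}+x^{b_j}}$; extract $x^{b_j}+x^{b_1}$ from each column $j\ge 2$ and $\frac{1}{x^{a_i}+x^{b_1}}$ from each row, leaving a first column of ones; then add row $1$ to rows $2,\ldots,\ell$ and extract $x^{a_1}+x^{a_i}$ from each row $i\ge 2$ and $\frac{1}{x^{a_1}+x^{b_j}}$ from each column $j\ge 2$. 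Only after both passes is the first row $(1,1,\ldots,1)$ with zeros below its leading entry, so the determinant reduces to the $(\ell-1)\times(\ell-1)$ Cauchy determinant on the index sets $\{a_2,\ldots,a_\ell\}$, $\{b_2,\ldots,b_\ell\}$, and the extracted factors are precisely those needed for \eqref{eq:detC}. With that correction your argument coincides with the paper's proof; as written, the reduction step and hence the bookkeeping are broken.
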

\begin{proof}
Recall that the polynomial $x^i+x^{j}$ is invertible in $C_p$ for $0\leq i<j<p$, and $\frac{1}{x^i+x^{j}}$ is the inverse of $x^i+x^{j}$. For the determinant of the Cauchy matrix in \eqref{eq:cauchy}, adding column 1 to each of columns 2 to $\ell$, we have the entry in the $i$ row and the $j$ column as
\begin{align*}
\frac{1}{x^{a_i}+x^{b_j}}+\frac{1}{x^{a_i}+x^{b_1}}=&\frac{(x^{a_i}+x^{b_1})+(x^{a_i}+x^{b_j})}{(x^{a_i}+x^{b_j})(x^{a_i}+x^{b_1})} \text{ (by \eqref{eq:3})}\\
=&\frac{(x^{b_j}+x^{b_1})}{(x^{a_i}+x^{b_1})}\cdot\frac{1}{(x^{a_i}+x^{b_j})} \text{ (by \eqref{eq:1})},
\end{align*}
where $1\leq i\leq \ell$ and $2\leq j\leq \ell$. There is no effect on the value of the determinant from multiple of row added to row of determinant. Thus, the determinant is
\[
D_{\ell}(x)=\begin{vmatrix}
\frac{1}{x^{a_1}+x^{b_1}} & \frac{(x^{b_2}+x^{b_1})}{(x^{a_1}+x^{b_1})}\cdot\frac{1}{(x^{a_1}+x^{b_2})} &  \cdots & \frac{(x^{b_\ell}+x^{b_1})}{(x^{a_1}+x^{b_1})}\cdot\frac{1}{(x^{a_1}+x^{b_\ell})}\\
\frac{1}{x^{a_2}+x^{b_1}} & \frac{(x^{b_2}+x^{b_1})}{(x^{a_2}+x^{b_1})}\cdot\frac{1}{(x^{a_2}+x^{b_2})} &  \cdots & \frac{(x^{b_\ell}+x^{b_1})}{(x^{a_2}+x^{b_1})}\cdot\frac{1}{(x^{a_2}+x^{b_\ell})} \\
\vdots & \vdots & \ddots& \vdots \\
\frac{1}{x^{a_\ell}+x^{b_1}} & \frac{(x^{b_2}+x^{b_1})}{(x^{a_\ell}+x^{b_1})}\cdot\frac{1}{(x^{a_\ell}+x^{b_2})} &  \cdots & \frac{(x^{b_\ell}+x^{b_1})}{(x^{a_\ell}+x^{b_1})}\cdot\frac{1}{(x^{a_\ell}+x^{b_\ell})}
\end{vmatrix}.
\]
Extracting the factor $\frac{1}{x^{a_i}+x^{b_1}}$ from the $i$ row for $i=1,2,\ldots,\ell$, and  the factor $x^{b_j}+x^{b_1}$ from the $j$ column for $j=2,3,\ldots,\ell$, we have
\[
D_{\ell}(x)=\left(\prod_{i=1}^{\ell}\frac{1}{x^{a_i}+x^{b_1}}\right)
\left(\prod_{j=2}^{\ell}(x^{b_j}+x^{b_1})\right)\begin{vmatrix}
1 & \frac{1}{(x^{a_1}+x^{b_2})} &  \cdots & \frac{1}{(x^{a_1}+x^{b_\ell})}\\
1 & \frac{1}{(x^{a_2}+x^{b_2})} &  \cdots & \frac{1}{(x^{a_2}+x^{b_\ell})} \\
\vdots & \vdots & \ddots& \vdots \\
1 & \frac{1}{(x^{a_\ell}+x^{b_2})} &  \cdots & \frac{1}{(x^{a_\ell}+x^{b_\ell})}
\end{vmatrix}.
\]
For $i=2,3,\ldots,\ell$, adding the first row to rows 2 to $\ell$, we have
\[
D_{\ell}(x)=\begin{vmatrix}
1 & \frac{1}{(x^{a_1}+x^{b_2})} &  \cdots & \frac{1}{(x^{a_1}+x^{b_\ell})}\\
0& \frac{(x^{a_1}+x^{a_2})}{(x^{a_1}+x^{b_2})}\cdot\frac{1}{(x^{a_2}+x^{b_2})} &  \cdots & \frac{(x^{a_1}+x^{a_2})}{(x^{a_1}+x^{b_\ell})}\cdot\frac{1}{(x^{a_2}+x^{b_\ell})} \\
\vdots & \vdots & \ddots& \vdots \\
0 & \frac{(x^{a_1}+x^{a_\ell})}{(x^{a_1}+x^{b_2})}\cdot\frac{1}{(x^{a_\ell}+x^{b_2})} &  \cdots & \frac{(x^{a_1}+x^{a_\ell})}{(x^{a_1}+x^{b_\ell})}\cdot\frac{1}{(x^{a_\ell}+x^{b_\ell})}
\end{vmatrix}.
\]
Again, extracting the factor $x^{a_1}+x^{a_i}$ from the $i$ row for $i=2,3,\ldots,\ell$, and  the factor $\frac{1}{x^{a_1}+x^{b_j}}$ from the $j$ column for $j=2,3,\ldots,\ell$, we have
\begin{align*}
D_{\ell}(x)=&\left(\prod_{i=1}^{\ell}\frac{1}{x^{a_i}+x^{b_1}}\right)\left(\prod_{j=2}^{\ell}\frac{1}{x^{a_1}+x^{b_j}}\right)
\left(\prod_{j=2}^{\ell}(x^{b_j}+x^{b_1})\right)\left(\prod_{i=2}^{\ell}(x^{a_1}+x^{a_i})\right)\\
&\cdot\begin{vmatrix}
1 & 1 &  \cdots & 1\\
0 & \frac{1}{(x^{a_2}+x^{b_2})} &  \cdots & \frac{1}{(x^{a_2}+x^{b_\ell})} \\
\vdots & \vdots & \ddots& \vdots \\
0 & \frac{1}{(x^{a_\ell}+x^{b_2})} &  \cdots & \frac{1}{(x^{a_\ell}+x^{b_\ell})}
\end{vmatrix}\\
=&\frac{\prod_{i=2}^{\ell}(x^{a_i}+x^{a_1})(x^{b_i}+x^{b_1})}{\prod_{1\leq i,j\leq \ell}(x^{a_i}+x^{b_1})(x^{a_1}+x^{b_j})}
\cdot\begin{vmatrix}
\frac{1}{(x^{a_2}+x^{b_2})} & \frac{1}{(x^{a_2}+x^{b_3})} &  \cdots & \frac{1}{(x^{a_2}+x^{b_\ell})}\\
\frac{1}{(x^{a_3}+x^{b_2})} & \frac{1}{(x^{a_3}+x^{b_3})} &  \cdots & \frac{1}{(x^{a_3}+x^{b_\ell})}\\
\vdots & \vdots & \ddots& \vdots \\
\frac{1}{(x^{a_\ell}+x^{b_2})} & \frac{1}{(x^{a_\ell}+x^{b_3})} &  \cdots & \frac{1}{(x^{a_\ell}+x^{b_\ell})}
\end{vmatrix}.
\end{align*}
Repeating the above process for the remaining $(\ell-1)\times(\ell-1)$ Cauchy determinant, we can obtain the determinant given in \eqref{eq:detC}.
\end{proof}
The next lemma gives a characterization of the MDS property in terms of determinants.
\begin{lemma}
Let $p$ be a prime number with $k+r \leq p$.
Then the determinant of any $k\times k$ sub-matrix of the generator matrix $\mathbf{G}_{k\times (k+r)}$, after reduction modulo $h(x)$, is invertible over $\mathbb{F}_2[x]/(h(x))$.
\label{invertible}
\end{lemma}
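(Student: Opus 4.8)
The plan is to reduce the statement, via a Laplace expansion along the columns coming from the identity block, to the invertibility of the Cauchy sub-determinants of $\mathbf{C}_{k\times r}$, and then to combine Lemma~\ref{lm:detC} with Lemma~\ref{lm:inverse} and the isomorphism of Lemma~\ref{lm:ismp}. First I would fix a $k\times k$ sub-matrix $\mathbf{A}$ of $\mathbf{G}_{k\times(k+r)}=[\mathcal{I}_{k}\mid\mathbf{C}_{k\times r}]$. Since $\mathbf{G}_{k\times(k+r)}$ has exactly $k$ rows, $\mathbf{A}$ uses all $k$ rows and some $k$ of the $k+r$ columns; say $k-s$ of these columns are taken from the $\mathcal{I}_{k}$ block and the remaining $s$ from the $\mathbf{C}_{k\times r}$ block, where $0\le s\le\min(k,r)$. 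Each column selected from $\mathcal{I}_{k}$ is a standard basis vector scaled by the identity $e(x)$ of $C_p$. Expanding $\det\mathbf{A}$ along these $k-s$ columns (the cofactor/Laplace expansion is valid over the commutative ring $C_p$) yields $\det\mathbf{A}=e(x)^{k-s}\det\mathbf{B}$ up to a sign, where the sign is immaterial in characteristic $2$, $e(x)^{k-s}=e(x)$ acts as the identity, and $\mathbf{B}$ is the $s\times s$ sub-matrix of $\mathbf{C}_{k\times r}$ obtained by keeping the $s$ chosen Cauchy columns and deleting the $k-s$ rows singled out by the unit columns. Thus $\det\mathbf{A}=\det\mathbf{B}$, and it suffices to show $\det\mathbf{B}$ is invertible in $C_p$ (the case $s=0$ being immediate since then $\det\mathbf{A}=e(x)$).

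Next I would identify $\mathbf{B}$ as a Cauchy matrix in the sense of Lemma~\ref{lm:detC}. The $(i,j)$ entry of $\mathbf{C}_{k\times r}$ equals $1/(x^{j}+x^{r+i})$, so its row monomials lie in $X=\{x^{r},x^{r+1},\ldots,x^{k+r-1}\}$ and its column monomials in $Y=\{x^{0},x^{1},\ldots,x^{r-1}\}$. Because $k+r\le p$, the exponents $0,1,\ldots,k+r-1$ are distinct integers in $\{0,\ldots,p-1\}$, so $X$ and $Y$ consist of $k+r$ pairwise distinct monomials with $X\cap Y=\emptyset$; hence selecting $s$ rows and $s$ columns produces $2s$ distinct monomials $x^{a_{1}},\ldots,x^{a_{s}},x^{b_{1}},\ldots,x^{b_{s}}$, and Lemma~\ref{lm:detC} gives
\[
\det\mathbf{B}=\frac{\prod_{s\ge j>i\ge 1}(x^{a_{j}}+x^{a_{i}})(x^{b_{i}}+x^{b_{j}})}{\prod_{s\ge j,i\ge 1}(x^{a_{i}}+x^{b_{j}})}.
\]

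To conclude, every factor occurring in this expression has the form $x^{\alpha}+x^{\beta}$ with $\alpha\ne\beta$ and $\alpha,\beta\in\{0,\ldots,p-1\}$; writing $x^{\alpha}+x^{\beta}=x^{\min(\alpha,\beta)}\bigl(1+x^{|\alpha-\beta|}\bigr)$ casts it in the shape $x^{t}+x^{t+b}$ with $t\ge 0$ and $1\le t+b\le p-1$, so Lemma~\ref{lm:inverse} makes it invertible in $C_p$. Consequently $\det\mathbf{B}$ is a product of invertible elements and inverses of invertible elements of the commutative ring $C_p$, hence invertible in $C_p$; so is $\det\mathbf{A}$. Finally, reduction modulo $h(x)$ is exactly the ring isomorphism $\theta\colon C_p\to\mathbb{F}_2[x]/(h(x))$ of Lemma~\ref{lm:ismp}, and a ring isomorphism carries units to units, so $\det\mathbf{A}\bmod h(x)=\theta(\det\mathbf{A})$ is invertible over $\mathbb{F}_2[x]/(h(x))$, which is the assertion.

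I expect the part needing the most care to be the Laplace expansion over the non-field ring $C_p$: one should verify that the classical cofactor identities hold over an arbitrary commutative ring (they do) and that ``deleting the rows used by the unit columns'' yields precisely the Cauchy sub-matrix $\mathbf{B}$ with the stated row and column monomials. Once that bookkeeping is pinned down, the rest is a matter of checking that every exponent in sight lies in the range required by Lemma~\ref{lm:inverse}, which is exactly what $k+r\le p$ ensures.
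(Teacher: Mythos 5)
Your proposal is correct and follows essentially the same route as the paper: reduce to an $\ell\times\ell$ Cauchy sub-matrix of $\mathbf{C}_{k\times r}$, apply the determinant formula of Lemma~\ref{lm:detC}, and use the invertibility of $x^a+x^b$ from Lemma~\ref{lm:inverse} to conclude the determinant is a unit. The only cosmetic differences are that you spell out the Laplace expansion that the paper merely asserts (``it is sufficient to show\ldots''), and that you transfer invertibility to $\mathbb{F}_2[x]/(h(x))$ via the isomorphism of Lemma~\ref{lm:ismp} rather than the paper's explicit computation showing $D_{\ell}(x)a(x)=1 \bmod h(x)$.
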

\begin{proof}
Note that the determinant of any square matrix of $\mathbf{G}_{k\times (k+r)}$ after reduction modulo $h(x)$ can be computed by first reducing each entry of the square matrix by $h(x)$, and then computing the determinant by reducing $h(x)$. It is sufficient to show that the determinant of any $\ell\times \ell$ sub-matrix of the matrix $\mathbf{C}_{k\times r}$, after reduction modulo $h(x)$, is invertible over $\mathbb{F}_2[x]/(h(x))$, for $1\leq \ell \leq \min\{k,r\}$.
Considering the matrix $\mathbf{C}_{k\times r}$ given in~\eqref{eq:C}, for any $\ell$ distinct rows indexed by $a_1,a_2,\cdots,a_{\ell}$ between 0 to $r-1$ and any $\ell$ distinct columns indexed by $b_1,b_2,\cdots,b_{\ell}$ between $r$ to $k+r-1$, the corresponding $\ell\times \ell$ sub-matrix is with the form of the Cauchy matrix given in \eqref{eq:cauchy}. Hence, the determinant is the polynomial given in \eqref{eq:detC}.
%Note that the determinant $D_{\ell}$ is performed in the ring $\mathbb{F}_2[x]/(h(x))$. Followed by the proof of the determinant of Cauchy matrix in \cite[P. 38, Exercise 38]{de1997art}, it is easy to have that the determinant $D_{\ell}$ is
As the polynomial $x^i+x^{j}$ is invertible in $C_p$, where $0\leq i<j<p$,  \eqref{eq:detC} is invertible in $C_p$. By the definition of invertible, there exist a polynomial $a(x)\in C_p$ such that
\[D_{\ell}(x)a(x)=e(x)\mod (1+x^p)\]
holds. Therefore, we have
\[D_{\ell}(x)a(x)+(1+x^p)b(x)=1+h(x) \]
for some polynomial $b(x)\in C_p$, and
$$D_{\ell}(x)a(x)+h(x)((1+x)b(x)+1)=1.$$
Hence, $D_{\ell}(x)a(x)=1\bmod h(x)$ and
this proves that the polynomial $(D_{\ell}(x) \bmod h(x))$ is invertible in $\mathbb{F}_2[x]/(h(x))$.

%Let $f_1(x) f_2(x) \cdots f_{L}(x)$ be the prime factorization of the check polynomial $h(x)$ over $\mathbb{F}_2$. By the Chinese remainder theorem, we have that a polynomial $a(x)$ in $\mathbb{F}_2[x]/(h(x))$ is invertible if and only if the polynomial $a(x) \mod f_i(x)$ is invertible in $\mathbb{F}_2[x]/(f_i(x))$, $\forall i\in\{1,2,\cdots,L\}$. Therefore, if the polynomial $1+x^c$ is invertible in the ring $\mathbb{F}_2[x]/(h(x))$, then $1+x^c$ is invertible in $\mathbb{F}_2[x]/f_i(x)$, $\forall i\in\{1,2,\cdots,L\}$. This is equivalent to the condition that $x^c$ is not congruent to 1 $\mod f_i(x)$, $\forall i\in\{1,2,\cdots,L\}$ and $1\leq c \leq k+r-1$. Note that $f_i(x)$ is a factor of $1+x^p$. If $1+x^c$ is divisible by $f_i(x)$, then $c$ must be a divisor of $p$. As $p$ is a prime number and $c<p$, we thus have that $1+x^c$ is not divisible by $f_i(x)$, $\forall i\in\{1,2,\cdots,L\}$ and $1\leq c \leq k+r-1$, and then the determinant in (\ref{eq:detC}) is invertible in $\mathbb{F}_2[x]/(h(x))$.
\end{proof}

By applying Lemma~\ref{thm:MDS1} and Lemma~\ref{invertible}, we have the following theorem.
\begin{theorem}
Let $p>2$ be a prime number.
For any positive integer $r\geq 1$ and $k\geq 2$, the array code $\mathcal{C}(k,r,p)$ satisfies the MDS property whenever $k+r \leq p$.
\label{thm:MDS5}
\end{theorem}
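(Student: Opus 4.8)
The plan is to obtain Theorem~\ref{thm:MDS5} by chaining Lemma~\ref{invertible} and Lemma~\ref{thm:MDS1} together, so the work is organizational rather than computational. First I would fix an arbitrary $k\times k$ sub-matrix $\mathbf{A}$ of $\mathbf{G}_{k\times(k+r)}=[\mathcal{I}_k \mid \mathbf{C}_{k\times r}]$, obtained by choosing $k$ of the $k+r$ columns; such a choice takes $k-\ell$ columns from the identity block and $\ell$ columns from the Cauchy block, for some $0\le\ell\le\min\{k,r\}$. Expanding $\det\mathbf{A}$ along the unit columns (and using that $e(x)$ is the identity of $C_p$) shows $\det\mathbf{A}$ equals an $\ell\times\ell$ minor of $\mathbf{C}_{k\times r}$. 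Because $k+r\le p$, the exponents $0,1,\ldots,r-1$ and $r,r+1,\ldots,r+k-1$ occurring in \eqref{eq:C} are all distinct and strictly less than $p$, so this minor is a genuine Cauchy determinant of the type handled by Lemma~\ref{lm:detC}; by \eqref{eq:detC} it is a product of inverses of binomials $x^i+x^j$ with $0\le i<j<p$, hence invertible in $C_p$. (This is exactly the reduction already carried out inside the proof of Lemma~\ref{invertible}, so I would mostly just cite that lemma.)

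Next I would invoke Lemma~\ref{invertible} directly: the reduction of $\det\mathbf{A}$ modulo $h(x)$ is an invertible element of $\mathbb{F}_2[x]/(h(x))$. The one point worth stating carefully is that $\mathbb{F}_2[x]/(h(x))$ need not be a field (it is one precisely when $2$ is primitive modulo $p$), so ``nonsingular'' in the hypothesis of Lemma~\ref{thm:MDS1} must be read as ``possessing a two-sided inverse'' and cannot be weakened to ``nonzero determinant''. However, over any commutative ring with identity a square matrix is invertible if and only if its determinant is a unit, via the adjugate identity $\bar{\mathbf{A}}\,\mathrm{adj}(\bar{\mathbf{A}})=(\det\bar{\mathbf{A}})\,\mathcal{I}_k$. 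Thus the invertibility of $\det\bar{\mathbf{A}}$ in $\mathbb{F}_2[x]/(h(x))$ yields that $\bar{\mathbf{A}}$ itself is nonsingular over $\mathbb{F}_2[x]/(h(x))$.

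Finally, since the $k\times k$ sub-matrix $\mathbf{A}$ was arbitrary, every $k\times k$ sub-matrix of $\mathbf{G}_{k\times(k+r)}$ is nonsingular over $\mathbb{F}_2[x]/(h(x))$ after reduction modulo $h(x)$, which is precisely the hypothesis of Lemma~\ref{thm:MDS1}; applying that lemma concludes that $\mathcal{C}(k,r,p)$ satisfies the MDS property for all $k\ge 2$, $r\ge 1$ with $k+r\le p$. I do not expect a genuine obstacle here: the only delicate step is recognizing that one needs the determinant to be a \emph{unit} (not merely nonzero) in $\mathbb{F}_2[x]/(h(x))$, which is exactly why Lemma~\ref{invertible} was phrased in terms of invertibility and why the Cauchy-determinant formula \eqref{eq:detC}, being a quotient of the always-invertible binomials $x^i+x^j$, suffices.
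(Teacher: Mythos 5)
Your proposal is correct and takes essentially the same route as the paper, which proves Theorem~\ref{thm:MDS5} simply by chaining Lemma~\ref{invertible} with Lemma~\ref{thm:MDS1}. Your additional care in passing from an invertible determinant to a nonsingular matrix via the adjugate identity $\bar{\mathbf{A}}\,\mathrm{adj}(\bar{\mathbf{A}})=(\det\bar{\mathbf{A}})\,\mathcal{I}_k$ merely makes explicit a step the paper leaves implicit, so it is a clarification rather than a different argument.
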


\section{Efficient Decoding Method}
\label{sec:decode}

In this section, we give a decoding method based on the $\mathbf{LU}$ factorization of the Cauchy matrix in \eqref{eq:cauchy}, which is very efficient in decoding the proposed array codes. Expressing a matrix as a product of a lower triangular matrix $\mathbf{L}$ and an upper triangular matrix $\mathbf{U}$ is called an \emph{$\mathbf{LU}$ factorization}. Some results of Cauchy matrix $\mathbf{LU}$ factorization over a field can be found in \cite{boros1999fast,Calvetti1997Factorizations}. We first give an $\mathbf{LU}$ factorization of Cauchy matrix over $C_p$, and then present the efficient decoding algorithm based on the $\mathbf{LU}$ factorization.
%We first review some results on the LU factorization of the Cauchy matrix.

\subsection{$\mathbf{LU}$ Factorization of Cauchy Matrix over $C_p$}

Given $2\ell$ distinct variables $a_1,a_2,\ldots,a_\ell,b_1,b_2,\ldots,b_\ell$ between 0 and $p-1$, the $\ell \times \ell$ square Cauchy matrix $\mathbf{C}(x^{a_{1:\ell}},x^{b_{1:\ell}})$ over ring $C_p$ is of the form in \eqref{eq:cauchy}. By Theorem \ref{thm:MDS5}, the matrix $\mathbf{C}(x^{a_{1:\ell}},x^{b_{1:\ell}})$ is invertible and the inverse matrix is denoted as $\mathbf{C}(x^{a_{1:\ell}},x^{b_{1:\ell}})^{-1}$. A factorization of $\mathbf{C}(x^{a_{1:\ell}},x^{b_{1:\ell}})^{-1}$ is derived, which is stated in the following theorem. No proof is given since it is similar to Theorem 3.1 in ~\cite{boros1999fast}.
\begin{theorem}
Let $x^{a_1},x^{a_2},\ldots,x^{a_\ell},x^{b_1},x^{b_2},\ldots,x^{b_\ell}$ be $2\ell$ distinct monomials in $R_p$, where $0\leq a_i,b_i<p$ for $i=1,2,\ldots,\ell$. The inverse matrix $\mathbf{C}(x^{a_{1:\ell}},x^{b_{1:\ell}})^{-1}$ can be decomposed as
\begin{equation}
\mathbf{C}(x^{a_{1:\ell}},x^{b_{1:\ell}})^{-1}=\mathbf{U}^1_\ell\mathbf{U}^2_\ell\ldots \mathbf{U}^{\ell-1}_\ell\mathbf{D}_\ell\mathbf{L}^{\ell-1}_\ell\ldots \mathbf{L}^{2}_\ell\mathbf{L}^{1}_\ell,
\label{eq:factor}
\end{equation}
where
\begin{align*}
\mathbf{L}^i_{\ell}=
\begin{bmatrix}
\mathcal{I}_i &&&\\
&\frac{1}{x^{a_{i+1}}+x^{a_1}}&&\\
&&\ddots &\\
&&& \frac{1}{x^{a_{\ell}}+x^{a_{\ell-i}}}\\
\end{bmatrix}
\begin{bmatrix}
\mathcal{I}_{i-1} &&&&\\
&e(x)&0&&\\
&x^{a_{1}}+x^{b_i}&x^{a_{i+1}}+x^{b_i}&&\\
&&\ddots &\ddots &\\
&&&x^{a_{\ell-i}}+x^{b_i}& x^{a_{\ell}}+x^{b_i}\\
\end{bmatrix},
\end{align*}
\begin{align*}
\mathbf{U}^i_\ell=
\begin{bmatrix}
\mathcal{I}_{i-1} &&&&\\
&e(x)&x^{a_{i}}+x^{b_1}&&\\
&0&x^{a_{i}}+x^{b_{i+1}}&\ddots&\\
&&&\ddots &x^{a_{i}}+x^{b_{\ell-i}}\\
&&&& x^{a_{i}}+x^{b_\ell}\\
\end{bmatrix}
\begin{bmatrix}
\mathcal{I}_{i} &&&\\
&\frac{1}{x^{b_{1}}+x^{b_{i+1}}}&&\\
&&\ddots &\\
&&& \frac{1}{x^{b_{\ell-i}}+x^{b_{\ell}}}\\
\end{bmatrix},
\end{align*}
for $i=1,2,\ldots,\ell-1$, and
\begin{equation}
\mathbf{D}_\ell=\text{diag}\{x^{a_{1}}+x^{b_{1}},x^{a_{2}}+x^{b_{2}},\ldots,x^{a_{\ell}}+x^{b_{\ell}}\}.
\label{eq:diag}
\end{equation}
\label{thm:cauchy_decomposition}
\end{theorem}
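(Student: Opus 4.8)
The plan is to deduce the identity \eqref{eq:factor} from its classical counterpart over a field, Theorem~3.1 of \cite{boros1999fast}, by exploiting the ring structure of $C_p$. Both sides of \eqref{eq:factor} are matrices over the commutative ring $C_p$, and since a matrix identity over a finite direct product of commutative rings holds if and only if it holds in each factor, it suffices to split $C_p$ into fields and verify the identity factor by factor. By Lemma~\ref{lm:ismp} we have $C_p \cong \mathbb{F}_2[x]/(h(x))$; because $p$ is an odd prime, $1+x^p = (1+x)h(x)$ is squarefree over $\mathbb{F}_2$ (its derivative is $x^{p-1}$, which is coprime to it), so $h(x) = h_1(x)\cdots h_s(x)$ with the $h_m(x)$ distinct and irreducible, and the Chinese Remainder Theorem gives $C_p \cong \prod_{m=1}^{s} F_m$ where $F_m = \mathbb{F}_2[x]/(h_m(x))$ is a finite field.

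Next I would follow the statement through each projection $\pi_m\colon C_p \to F_m$. Write $\omega_m$ for the image of $x$ in $F_m$. Since $h_m(\omega_m)=0$ gives $1+\omega_m^p=0$, and $h_m(1)\neq 0$ (because $h(1)=1$ forces every $h_m(1)=1$) gives $\omega_m\neq 1$, the order of $\omega_m$ is exactly $p$; hence $1,\omega_m,\omega_m^2,\dots,\omega_m^{p-1}$ are pairwise distinct in $F_m$, so the $2\ell$ distinct monomials $x^{a_1},\dots,x^{b_\ell}$ are sent to $2\ell$ distinct elements of $F_m$. As $\pi_m$ is a unital ring homomorphism it sends $e(x)$ to $1_{F_m}$, hence sends each entry $1/(x^u+x^v)$ (with $u\neq v$, invertible by Lemma~\ref{lm:inverse}) to $(\omega_m^u+\omega_m^v)^{-1}$; therefore $\pi_m$ carries $\mathbf{C}(x^{a_{1:\ell}},x^{b_{1:\ell}})$ to the ordinary nonsingular Cauchy matrix $\bigl[(\omega_m^{a_i}+\omega_m^{b_j})^{-1}\bigr]$ over $F_m$, carries its inverse to the inverse of that matrix, and carries $\mathbf{U}^i_\ell,\mathbf{L}^i_\ell,\mathbf{D}_\ell$ to the bidiagonal and diagonal factors occurring in Theorem~3.1 of \cite{boros1999fast}. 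Invoking that theorem in each $F_m$ establishes \eqref{eq:factor} in every factor, and reassembling the factors yields \eqref{eq:factor} over $C_p$.

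An alternative route, closer to the phrase ``the proof is similar to \cite{boros1999fast}'', is to replay the Schur-complement recursion directly over $C_p$. One would first check that each factor matrix in \eqref{eq:factor} is well defined (its fractional entries are all of the form $1/(x^u+x^v)$ with $u\neq v$) and invertible (each is triangular with diagonal entries $e(x)$ or $x^u+x^v$, $u\neq v$). Then, eliminating the first row and column of $\mathbf{C}(x^{a_{1:\ell}},x^{b_{1:\ell}})$ --- which is precisely the column-to-column and row-to-row operation computation already carried out inside the proof of Lemma~\ref{lm:detC}, now kept at the matrix level with \eqref{eq:1}--\eqref{eq:3} supplying the fraction arithmetic --- exhibits a factorization of $\mathbf{C}(x^{a_{1:\ell}},x^{b_{1:\ell}})$ into bidiagonal and diagonal factors times a block-diagonal matrix with a scalar $e(x)$ block and the order-$(\ell-1)$ Cauchy block $\mathbf{C}(x^{a_{2:\ell}},x^{b_{2:\ell}})$. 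Inverting, substituting the induction hypothesis for the smaller block, and matching indices against \eqref{eq:factor} completes the induction; the base case $\ell=1$ is simply $\mathbf{C}(x^{a_1},x^{b_1})^{-1}=[x^{a_1}+x^{b_1}]=\mathbf{D}_1$.

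The main obstacle, in either route, is that $C_p$ is not a field, so the classical Cauchy $\mathbf{LU}$ statement cannot be quoted verbatim; what makes the adaptation go through is that the only ring elements ever inverted are $e(x)$ and the $x^u+x^v$ with $u\neq v$, all invertible by Lemma~\ref{lm:inverse}, and --- in the reduction route --- that the $2\ell$ monomials remain distinct after projecting to each field factor. That last point is exactly where primality of $p$ enters: it forces $\omega_m$ to have order $p$, so distinct exponents in $\{0,\dots,p-1\}$ give distinct powers and the projected Cauchy matrices stay nonsingular; without it Theorem~3.1 of \cite{boros1999fast} would not apply to the factors.
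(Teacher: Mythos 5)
Your proposal is sound, but note that the paper itself gives no argument at all for this theorem: it simply states that ``no proof is given since it is similar to Theorem 3.1 in~\cite{boros1999fast}.'' Your second route---re-running the Schur-complement/elimination recursion over $C_p$ and observing that the only elements ever inverted are $e(x)$ and $x^u+x^v$ with $u\neq v$, which are units by Lemma~\ref{lm:inverse}---is precisely the adaptation the authors have in mind, so on that route you are reconstructing the omitted argument rather than diverging from it. Your first route is genuinely different and, to my mind, cleaner: since $1+x^p$ has derivative $x^{p-1}$ and is therefore squarefree over $\mathbb{F}_2$, Lemma~\ref{lm:ismp} plus the Chinese Remainder Theorem splits $C_p$ into a product of finite fields $F_m$, and because $h_m(1)=1$ the image $\omega_m$ of $x$ has exact order $p$, so the $2\ell$ distinct monomials project to $2\ell$ distinct field elements and Theorem~3.1 of \cite{boros1999fast} can be quoted verbatim in each factor; a matrix identity over a direct product holds componentwise, so \eqref{eq:factor} follows over $C_p$. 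What the CRT route buys is that the field-case theorem is used as a black box, with all ring-theoretic care concentrated in the (easy) splitting and order-of-$\omega_m$ arguments; what the direct route buys is self-containedness and no reliance on the factorization of $h(x)$, at the price of redoing the induction and checking that \eqref{eq:1}--\eqref{eq:3} support every fraction manipulation. Either way your key observation---that primality of $p$ guarantees both the invertibility of all $x^u+x^v$, $u\neq v$, and the distinctness of the projected nodes---is exactly the point that makes the transfer from the field setting legitimate.
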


Next we give an example of the factorization. Considering  $\ell=2$, the matrix $\mathbf{C}(x^{a_{1:2}},x^{b_{1:2}})^{-1}$ can be factorized into \begin{align*}
\mathbf{U}^{1}_2\cdot \mathbf{D}_2\cdot \mathbf{L}^{1}_2=&\begin{bmatrix}
e(x) & x^{a_1}+x^{b_1}  \\
0 & x^{a_1}+x^{b_2}
\end{bmatrix}
\begin{bmatrix}
e(x) & 0  \\
0 & \frac{1}{x^{b_1}+x^{b_2}}
\end{bmatrix}
\\
&
\begin{bmatrix}
x^{a_1}+x^{b_1} & 0  \\
0 & x^{a_2}+x^{b_2}
\end{bmatrix}
\begin{bmatrix}
e(x) & 0 \\
0 & \frac{1}{x^{a_2}+x^{a_1}}
\end{bmatrix}
\begin{bmatrix}
e(x) & 0 \\
x^{a_1}+x^{b_1} & x^{a_2}+x^{b_1}
\end{bmatrix}.
\end{align*}

%
%The proof of Theorem~\ref{thm:cauchy_decomposition} can be found in \cite{boros1999fast}.

%\subsection{ Solving a Cauchy system of linear equations}
Based on the factorization in Theorem~\ref{thm:cauchy_decomposition}, we have a fast algorithm for solving a Cauchy system of linear equations over $C_p$ as that given in~\cite{boros1999fast} for a field. Given an $\ell\times \ell$ linear system in Cauchy matrix form
\begin{equation}\mathbf{C}(x^{a_{1:\ell}},x^{b_{1:\ell}})\mathbf{s}  =\mathbf{c},\label{eq:linear-equations}
\end{equation}
where $\mathbf{s}=(s_1(x), s_2(x), \ldots, s_{\ell}(x))^t$ is a column of length $\ell$ over $C_p$ and $\mathbf{c}=(c_1(x), c_2(x), \ldots, c_{\ell}(x))^t$ is a column of length $\ell$ over $C_p$. We can solve the equation for $\mathbf{s}$, given $\mathbf{C}(x^{a_{1:\ell}},x^{b_{1:\ell}})$ and $\mathbf{c}$, by computing
\begin{equation}
\mathbf{U}^1_\ell\mathbf{U}^2_\ell\ldots \mathbf{U}^{\ell-1}_\ell\mathbf{D}_\ell\mathbf{L}^{\ell-1}_\ell\ldots \mathbf{L}^{2}_\ell\mathbf{L}^{1}_\ell\mathbf{c}.
\label{Cauchy-factor}
\end{equation}
%Suppose we have already obtained the 1-banded LU decomposition of $\mathbf{V}_{\nu}$ in (\ref{eq:V}).
The pseudocode is stated in Algorithm \ref{alg:A1}.
%In a communicative ring, the multiplication operation is communicative. Therefore, we can first compute the multiplication of the diagonal matrix $D$ and $\mathbf{b}$, and then compute the multiplication for $L$ matrices and $U$ matrices in Algorithm \ref{alg:A1}.

\begin{algorithm}[!htb]
\caption{Solving a Cauchy linear system over $C_p$.}
\label{alg:A1}
{\bf Inputs:}

\quad Positive integer $\ell$, prime number $p>2$, the values of $\mathbf{c}=(c_1(x), c_2(x), \ldots, c_{\ell}(x))^t$, $a_1,a_2,\ldots, a_\ell$ and $b_1,b_2,\ldots, b_\ell$.

{\bf Outputs:}

\quad The values of $\mathbf{s}=(s_1(x), s_2(x), \ldots, s_{\ell}(x))^t$.

\begin{algorithmic}[1]

\REQUIRE  All  $2\ell$ distinct non-negative integers $a_1,\ldots, a_\ell,b_1,\ldots, b_\ell$ are less than $p$.
\FOR {$i=1,2,\ldots,\ell$}
\STATE $s_i(x)=c_i(x)$.
\ENDFOR

\FOR {$i=1,2,\ldots,\ell-1$}
\FOR {$j=i+1,i+2,\ldots,\ell$}
\STATE  $s_j(x)=(x^{a_{j-i}}+x^{b_{i}})s_{j-1}(x)+(x^{a_{j}}+x^{b_{i}})s_j(x)$.
\ENDFOR
\FOR {$j=i+1,i+2,\ldots,\ell$}
\STATE  $s_j(x)=\frac{1}{x^{a_{j}}+x^{a_{j-i}}} s_j(x)$.
\ENDFOR
\ENDFOR

\FOR {$i=1,2,\ldots,\ell$}
\STATE $s_i(x)=(x^{a_{i}}+x^{b_{i}})s_i(x)$.
\ENDFOR

\FOR {$i=\ell-1,\ell-2,\ldots,1$}
\FOR {$j=i+1,i+2,\ldots,\ell$}
\STATE $s_j(x)=\frac{1}{x^{b_{j-i}}+x^{b_{j}}}s_{j}(x)$.
\ENDFOR
\FOR {$j=1,2,\ldots,\ell$}
\IF {$j-i=0$}
\STATE $s_j(x)=s_j(x)+(x^{a_{i}}+x^{b_{j-i+1}})s_{j+1}(x)$.
\ENDIF
\IF {$\ell-1 \geq j-i \geq 1$}
\STATE $s_j(x)=(x^{a_{i}}+x^{b_{j}})s_j(x)+(x^{a_{i}}+x^{b_{j-i+1}})s_{j+1}(x)$.
\ENDIF
\ENDFOR
\STATE $s_\ell(x)=(x^{a_{i}}+x^{b_{\ell}})s_\ell(x)$.
\ENDFOR

\end{algorithmic}
\end{algorithm}

\subsection{Decoding Algorithm of Erasures}
We now describe the decoding procedure of any $\rho\leq r$ erasures for the array codes $\mathcal{C}(k,r,p)$. Suppose that $\gamma$ information columns $a_{1},a_{2},\ldots,a_{\gamma}$ and $\delta$ parity columns $b_{1},b_{2},\ldots,b_{\delta}$ erased with $0\leq a_{1}<a_{2}<\ldots<a_{\gamma}\leq k-1$ and $0\leq b_{1}<b_{2}<\ldots<b_{\delta}\leq r-1$, where $k\geq \gamma \geq 0$, $r\geq \delta \geq 0$ and $\gamma+\delta=\rho\leq r$. Let $$\mathcal{A}:=\{0,1,\ldots,k-1\}\setminus \{a_{1},a_{2},\ldots,a_{\gamma}\}$$
be set of the indices of the available information columns, and let
$$\mathcal{B}:=\{0,1,\ldots,r-1\}\setminus \{b_{1},b_{2},\ldots,b_{\delta}\}$$
be set of the indices of the available parity columns.

We want to first recover the lost information columns by reading $k-\gamma$ information columns with indices $i_1,i_2,\ldots, i_{k-\gamma}\in\mathcal{A}$, and $\gamma$ parity columns with indices $\ell_1, \ell_2,\ldots,\ell_\gamma \in \mathcal{B}$, and then recover the failure parity column by multiplying the corresponding encoding vector and the $k$ data polynomials.

For $\tau=1,2,\ldots,k-\gamma$, we add the extra parity-check bit for information column $i_{\tau}$ to obtain the data polynomial
$$s_{i_\tau}(x)=s_{0,i_\tau}+s_{1,i_\tau}x+\cdots+s_{p-2,i_\tau}x^{p-2}+(\sum_{j=0}^{p-2}s_{j,i_\tau})x^{p-1}.$$
For $h=1,2,\ldots,\gamma$, since the coded polynomial $c_{\ell_h}(x)\in C_p$, we have
$$c_{\ell_h}(x)=c_{0,\ell_h}+c_{1,\ell_h}x+\cdots+c_{p-2,\ell_h}x^{p-2}+(\sum_{j=0}^{p-2}c_{j,\ell_h})x^{p-1}.$$
Let $p_{\ell_1}(x),p_{\ell_2}(x),\ldots,p_{\ell_\gamma}(x)$ be the polynomials by subtracting
the chosen $k-\gamma$ data polynomials $s_{i_1}(x),s_{i_2}(x),\ldots,s_{i_{k-\gamma}}(x)$ from $\gamma$ coded polynomials $c_{\ell_1}(x),c_{\ell_2}(x),\ldots,c_{\ell_\gamma}(x)$, i.e.,
\begin{equation}
p_{\ell_h}(x)\triangleq c_{\ell_h}(x)+\sum_{j=1}^{k-r}\frac{1}{x^{\ell_h}+x^{\ell_h+i_j+r-1}}s_{i_j}(x),
\label{eq:px}
\end{equation}
for $h=1,2,\ldots,\gamma$.
We can obtain the $\gamma$ information erasures by solving the following system of linear equations
\begin{equation}
\begin{bmatrix}
\frac{1}{x^{\ell_1}+x^{a_1+r}} & \frac{1}{x^{\ell_1}+x^{a_2+r}} & \cdots & \frac{1}{x^{\ell_1}+x^{a_\gamma+r}} \\
\frac{1}{x^{\ell_2}+x^{a_1+r}} & \frac{1}{x^{\ell_2}+x^{a_2+r}} & \cdots & \frac{1}{x^{\ell_2}+x^{a_\gamma+r}} \\
\vdots   & \vdots   & \ddots & \vdots \\
\frac{1}{x^{\ell_\gamma}+x^{a_1+r}} & \frac{1}{x^{\ell_\gamma}+x^{a_2+r}} & \cdots & \frac{1}{x^{\ell_\gamma}+x^{a_\gamma+r}} \\
\end{bmatrix}
\begin{bmatrix}
s_{a_{1}}(x) \\ s_{a_{2}}(x)\\ \vdots \\ s_{a_{\gamma}}(x)
\end{bmatrix}
=
\begin{bmatrix}
p_{\ell_1}(x) \\ p_{\ell_2}(x)\\ \vdots \\ p_{\ell_\gamma}(x)
\end{bmatrix}.\label{eq:decoding-linear-system}
\end{equation}
The above system of linear equations is with the form of \eqref{eq:linear-equations} such that
Algorithm \ref{alg:A1} can be applied to obtain  the $\gamma$ failure data polynomials. Then we can recover the $\delta$ coded polynomials by multiplying the corresponding encoding vectors and $k$ data polynomials.

\subsection{Computation Complexity of Linear System in Cauchy Matrix}
\subsubsection{Algorithm for division}

 In computing the coded polynomial in~\eqref{eq:en} and in Algorithm~\ref{alg:A1}, we should compute many divisions of the form $\frac{s(x)}{x^t+x^{t+b}}$, where $s(x)\in C_p$, $b$ and $t$ are non-negative integers such that $b+t<p$. Let's first consider the calculation of
\begin{equation}
 \frac{s(x)}{x^t+x^{t+b}}= c(x) \mod (1+x^p),
\label{division}
\end{equation}
where $s(x)\in C_p$ and $c(x)\in R_p$. If $c(x)\not\in C_p$, we can take $c(x)+h(x)$, which is in $C_p$, instead. Later we will show that the above step is not necessary in encoding and decoding processes if we allow some coded polynomials to be of form $c(x)+h(x)$.
%As $x^t+x^{t+b}$ is pseudo-invertible, we can always find two polynomials $c(x)$ and $c(x)+h(x)$ in $R_p$ such that the equation \eqref{inverse-1} holds:
%\begin{equation}
%s(x)\equiv c(x)(x^t+x^{t+b}) \equiv (c(x)+h(x))(x^t+x^{t+b}) \mod (1+x^p).
%\label{inverse-1}
%\end{equation}
%Therefore, each coded polynomial in~\eqref{eq:en} has two different resulting polynomials, and the difference of two resulting polynomials is equal to $h(x)$.
The following lemma demonstrates an efficient method to compute \eqref{division}.

%The proposed array code can be obtained by puncturing a systematic linear code over the ring $R_p$.

\begin{algorithm}[!htb]
\caption{Solving the division given in  \eqref{division}.}
\label{alg:division}
{\bf Inputs:}

\quad Non-negative integers $b,t$ and $s_0,s_1,\ldots,s_{p-1}$, where $s_i\in\{0,1\}$ for $i=0,1,\ldots,p-1$.

{\bf Outputs:}

\quad $c_0,c_1,\ldots,c_{p-1}$, where $c_i\in\{0,1\}$ for $i=0,1,\ldots,p-1$.

\begin{algorithmic}[1]

\REQUIRE  Both $b+t<p$  and  $s_0+s_1+\ldots +s_{p-1}=0$ hold.
\STATE $c_{p-1}=0$.
\STATE $c_{p-b-1}=s_{(t-1)\bmod p}$.
\STATE $c_{b-1}=s_{(t+b-1)\bmod p}$.
\FOR {$i=2,3,\ldots,p-2$}
\STATE $c_{(p-ib-1)\bmod p}=s_{(t-(i-1)b-1)\bmod p}+c_{p-(i-1)b-1}$.
\ENDFOR
\IF {$\sum_{i=0}^{p-1}c_i\neq 0$}
 \FOR {$i=0,1,\ldots, p-1$} \STATE $c_i=c_i+1$. \ENDFOR
\ENDIF
\end{algorithmic}
*Steps 6, 7, and 8 are deleted in simplified version.
\end{algorithm}

\begin{lemma}
 The coefficients of $c(x)$ in \eqref{division} can be computed by Algorithm \ref{alg:division}, where $t,b\ge 0$, $0< b+t<p$, $s(x)=\sum_{i=0}^{p-1}s_ix^i\in C_p$, and $c(x)=\sum_{i=0}^{p-1}c_ix^i\in C_p$.
%\begin{align*}
%&c_{p-1}=0,\ c_{p-b-1}=s_{p-1},c_{p-2b-1}=s_{p-2b-1}+c_{p-b-1},\ c_{p-3b-1}=s_{p-3b-1}+c_{p-2b-1},\ \cdots, \\
%&c_{p-(p-3)b-1}=s_{p-(p-3)b-1}+c_{p-(p-2)b-1},\ c_{p-(p-2)b-1}=s_{p-(p-2)b-1}+c_{p-(p-1)b-1},\ c_{b-1}=s_{b-1}.\\
%\vspace{-2.1cm}
%\end{align*}
\label{lm:inverse2}
\end{lemma}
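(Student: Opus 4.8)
The plan is to recognize \eqref{division} as a circulant system of $\mathbb{F}_2$-linear equations and then to check that Algorithm~\ref{alg:division} is exactly a triangular solve of that system, followed by a normalization step that pushes the answer into $C_p$.

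First I would strip off the ring inverse. Since $s(x)\in C_p$ and $e(x)$ is the identity of $C_p$, the equation $\frac{s(x)}{x^t+x^{t+b}}=c(x)$ amounts to $(x^t+x^{t+b})c(x)=s(x)\bmod(1+x^p)$; multiplying by $x^{p-t}$, the inverse of $x^t$ in $R_p$, turns this into $(1+x^b)c(x)=x^{p-t}s(x)\bmod(1+x^p)$. Writing $s'_i:=s_{(i+t)\bmod p}$ for the coefficients of $x^{p-t}s(x)$ and equating coefficients gives the system
\[
c_i+c_{(i-b)\bmod p}=s'_i,\quad i=0,1,\ldots,p-1,\qquad(\star)
\]
so the lemma becomes the claim that Algorithm~\ref{alg:division} returns the unique solution of $(\star)$ lying in $C_p$. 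Two structural facts about $(\star)$ carry the proof. (i) Because $(b,p)=1$, the permutation $i\mapsto(i-b)\bmod p$ is a single $p$-cycle, so its homogeneous version forces all $c_i$ equal and the solution space of the homogeneous system is $\{0,h(x)\}$; hence $(\star)$ has either no solution, or exactly the two solutions $c(x)$ and $c(x)+h(x)$ in $R_p$, of which precisely one has an even number of nonzero coefficients and thus lies in $C_p$. (ii) Summing all $p$ equations of $(\star)$ gives $0=\sum_i s'_i=\sum_i s_i$, which is precisely the hypothesis $s(x)\in C_p$; so $(\star)$ is consistent, and any $p-1$ of its equations already imply the remaining one.

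Next I would match Algorithm~\ref{alg:division} to a triangular elimination order for $(\star)$. The initialization $c_{p-1}=0$ selects one of the two $R_p$-solutions. The equation of $(\star)$ for $i=p-1$ is $c_{p-1}+c_{(p-b-1)\bmod p}=s'_{p-1}$, which with $c_{p-1}=0$ gives $c_{p-b-1}=s'_{p-1}=s_{(t-1)\bmod p}$; the equation for $i=b-1$ is $c_{b-1}+c_{p-1}=s'_{b-1}$, which gives $c_{b-1}=s'_{b-1}=s_{(t+b-1)\bmod p}$; and for $i=2,\ldots,p-2$ the equation of $(\star)$ for the index $(p-(i-1)b-1)\bmod p$ rearranges to $c_{(p-ib-1)\bmod p}=s_{(t-(i-1)b-1)\bmod p}+c_{(p-(i-1)b-1)\bmod p}$, which is exactly the loop recursion (with the already-computed coordinate on the right). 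What remains is bookkeeping modulo $p$: since $(b,p)=1$, as $i$ runs over $0,1,\ldots,p-1$ the indices $(p-ib-1)\bmod p$ run bijectively over $\{0,\ldots,p-1\}$, with $i=p-1$ producing the separately handled coordinate $b-1$; hence every $c_i$ is assigned exactly once and always in terms of previously assigned coordinates, so the order is genuinely triangular. The $p-1$ equations actually invoked are pairwise distinct, and a short computation shows the one equation left unused is the one for $i=(2b-1)\bmod p$ — which is exactly the redundant equation from fact (ii). Consequently, after the loop $(c_0,\ldots,c_{p-1})$ is the unique solution of $(\star)$ in $R_p$ with $c_{p-1}=0$, equivalently the unique $c(x)\in R_p$ with $(x^t+x^{t+b})c(x)=s(x)\bmod(1+x^p)$ and $c_{p-1}=0$.

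Finally, the conditional block: if $\sum_i c_i\neq0$ the current polynomial is not in $C_p$, and ``adding $1$ to every $c_i$'' replaces it by $c(x)+h(x)$ (as $h(x)=1+x+\cdots+x^{p-1}$), which is still a solution because $(x^t+x^{t+b})h(x)=0$ and now has an even number of nonzero coefficients; if $\sum_i c_i=0$ the polynomial already lies in $C_p$. Either way, the output is the unique element of $C_p$ satisfying $(x^t+x^{t+b})c(x)=s(x)$, which by definition is $\frac{s(x)}{x^t+x^{t+b}}\bmod(1+x^p)$, and the lemma follows. I expect the only real obstacle to be the third paragraph: pinning down the index arithmetic modulo $p$ so that the elimination really is triangular, every coordinate is produced exactly once, and the single skipped equation is precisely the redundant one; everything else is the short structural bookkeeping of the second and last paragraphs.
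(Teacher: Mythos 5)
Your proof is correct and takes essentially the same route as the paper's: both turn the division into the coefficient system of $(x^t+x^{t+b})c(x)=s(x)\bmod(1+x^p)$, fix $c_{p-1}=0$, solve the resulting recurrence using that $\gcd(b,p)=1$ makes the index map a full $p$-cycle (so each coefficient is assigned exactly once), and add $h(x)$ at the end when the weight is odd. Your extra bookkeeping --- consistency via summing the equations, the homogeneous kernel $\{0,h(x)\}$, and identifying the single skipped equation at index $(2b-1)\bmod p$ --- is sound and simply makes explicit what the paper gets by invoking Lemma~\ref{lm:inverse} for the existence of the two solutions $c(x)$ and $c(x)+h(x)$.
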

\begin{proof}
By \eqref{division} and Lemma~\ref{lm:inverse}, we have
\begin{equation}
\label{eq:division-2}
s(x)\left(x^{p-t}(1+x^{2b}+x^{4b}+\cdots + x^{(p-1)b})\right)
= c(x) \mod (1+x^p).
\end{equation}
Multiplied  by $x^t+x^{t+b}$, \eqref{eq:division-2} becomes
\begin{equation}
\label{eq:division-3}
s(x) (1+h(x))
= c(x)(x^t+x^{t+b}) \mod (1+x^p).
\end{equation}
By \eqref{zero}, \eqref{eq:division-3} is equivalent to
\begin{equation}
\label{eq:division-4}
s(x)
= c(x)(x^t+x^{t+b}) \mod (1+x^p).
\end{equation}
Then, the coefficients of $s(x)$ and $c(x)$ satisfy
\begin{eqnarray}\label{eq:coefficient}
s_{t}&=&c_0+c_{p-b},\nonumber\\
s_{t+1}&=&c_1+c_{p-b+1},\nonumber\\
s_{(t+2)\bmod p}&=&c_2+c_{p-b+2},\nonumber\\
&&\text{  }\text{  }\text{  }\text{  }\text{  }\vdots\\
s_{(t-2)\bmod p}&=&c_{p-2}+c_{p-b-2},\nonumber\\
s_{(t-1)\bmod p}&=&c_{p-1}+c_{p-b-1}.\nonumber
\end{eqnarray}
Recall that, given $s(x)$, $t$ and $b$,  there are two polynomials $c(x)=\sum_{i=0}^{p-1}c_ix^i$ and $c(x)+h(x)=\sum_{i=0}^{p-1}(c_i+1)x^i$ such that
$$\Big( \sum_{i=0}^{p-1}c_ix^i\Big)(x^t+x^{t+b})=\Big(\sum_{i=0}^{p-1}(c_i+1)x^i\Big)(x^t+x^{t+b})=s(x).$$
We can choose one coefficient $c_i$ of $c(x)$ to be zero, and all the other coefficients can be computed iteratively. Specifically, in Algorithm \ref{alg:division}, we let $c_{p-1}=0$. Then we obtain $c_{p-b-1}=s_{(t-1)\bmod p}$ and $c_{b-1}=s_{(t+b-1)\bmod p}$.
 Substituting $c_{p-b-1}$ into the corresponding equation in \eqref{eq:coefficient},  we have
$$c_{(p-2b-1)\bmod p}=s_{(t-b-1)\bmod p}+c_{p-b-1}.$$
In general, we have
$$c_{(p-ib-1)\bmod p}=s_{(t-(i-1)b-1)\bmod p}+c_{(p-(i-1)b-1)\bmod p}$$ for $2\le i\le p-2$. Note that each  coefficient can be calculated iteratively with at most one XOR operation involved. Next we need to prove that $$\{(p-ib-1)\bmod p|1\le i\le p-2\}=\{0,1,2,\ldots,p-2\}.$$ First we prove that if $i\neq j$, then $(p-ib-1)\bmod p\neq (p-jb-1)\bmod p$. Assume that $(p-ib-1)\bmod p= (p-jb-1)\bmod p$ and $1\le j<i\le p-2$. Then there exists an integer $\ell$ such that
$$p-jb-1=\ell p+p-ib-1.$$ The above equation can be further reduced to
$$(i-j)b=\ell p.$$ Since either $b=1$ or $b\not| p$, we have $(i-j)|p$. However, this is impossible due to the fact that  $1\le j<i\le p-2$. Similarly, we can prove that, for $1\le i\le p-2$,
$$p-ib-1 \bmod p\neq p-1.$$ Hence, $\{(p-ib-1)\bmod p|1\le i\le p-2\}=\{0,1,2,\ldots,p-2\}$. Finally, if $\sum_{i=0}^{p-1}c_i\neq 0$, then $c(x)\not\in C_p$; however, $c(x)+h(x)\in C_p$.
\end{proof}

\subsubsection{Simplified algorithm for division}

Next, we prove that Steps 6, 7 and 8  in Algorithm~\ref{alg:division} are not necessary. Hence, the computation complexity of Algorithm~\ref{alg:division} can be reduced drastically. We name Algorithm~\ref{alg:division} without Steps 6, 7 and 8 as simplified Algorithm~\ref{alg:division}.

Recall that, after dropping Steps $6-8$ in Algorithm~\ref{alg:division}, the output of the algorithm might be $c(x)+h(x)$ instead of $c(x)$; however, we will show that the data polynomials can be recovered after performing the proposed decoding algorithm no matter which algorithm is performed.
\begin{theorem}
\label{equivalent-code}
The proposed decoding 	algorithm outputs the same data polynomials no matter Algorithm~\ref{alg:division} or simplified Algorithm~\ref{alg:division} is performed.
\end{theorem}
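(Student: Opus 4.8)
The plan is to run the decoding procedure twice in parallel --- once with the full Algorithm~\ref{alg:division} and once with the simplified one --- and to track, step by step, the XOR of the two values of each intermediate polynomial, showing that this ``discrepancy'' has disappeared by the time the data polynomials are returned.

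I would first record two elementary facts. \textbf{(i)} For any input $s(x)\in C_p$, both versions of Algorithm~\ref{alg:division} return a polynomial $c(x)$ satisfying $c(x)(x^t+x^{t+b})=s(x)\bmod(1+x^p)$: Steps~$1$--$5$ already force the identity~\eqref{eq:division-4} (this is exactly where $s(x)\in C_p$ enters, as the consistency condition for the recurrence), while the discarded Steps~$6$--$8$ merely add $h(x)=1+x+\cdots+x^{p-1}$ --- flipping all $p$ coefficients is the same as adding $h(x)$ --- in order to land in $C_p$, and this does not affect the identity since $(1+x)\mid(x^t+x^{t+b})$ together with $h(x)(1+x)\equiv 0$ gives $h(x)(x^t+x^{t+b})\equiv 0\bmod(1+x^p)$. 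Hence the outputs of the two versions differ by an additive term in $\{0,h(x)\}$. \textbf{(ii)} For any $a\neq b$, $h(x)(x^a+x^b)\equiv 0\bmod(1+x^p)$ and $h(x)+h(x)=0$; moreover in the decoding procedure every polynomial that is fed to a division --- whether in~\eqref{eq:px} or inside Algorithm~\ref{alg:A1} --- is either an original data or coded polynomial (which the decoder always forms inside $C_p$ by appending the parity-check bit) or the result of a multiplication by some factor $x^a+x^b$ (hence lies in $C_p$), so the precondition of Algorithm~\ref{alg:division} is met throughout, in both runs.

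Now, for every intermediate polynomial $v$ of the decoding procedure let $\Delta(v)$ denote the XOR of its value in the simplified run and its value in the full run. By~(i), a division step produces an output with $\Delta\in\{0,h(x)\}$ whenever the $\Delta$ of its input is $0$; by~(ii), a multiplication by $x^a+x^b$ turns $\Delta$ of the result into $x^a+x^b$ times $\Delta$ of the argument, which is $0$ as soon as the latter lies in $\{0,h(x)\}$; and $\Delta$ is $\mathbb{F}_2$-linear under addition. I would then establish the invariant: \emph{throughout both runs every polynomial has $\Delta\in\{0,h(x)\}$, and each polynomial is ``clean'' (i.e.\ $\Delta=0$) at the moment it is used as the input of a division or added into another polynomial without first being multiplied by some $x^a+x^b$.} This is verified by inspecting the factorization~\eqref{eq:factor} as carried out by Algorithm~\ref{alg:A1}. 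Applying a factor $\mathbf{L}^i_\ell$ first recombines each affected coordinate $s_j$ ($j>i$) as a sum of two terms, each multiplied by some $x^a+x^b$, so those $s_j$ become clean, and only then rescales them by a reciprocal $\frac{1}{x^{a_j}+x^{a_{j-i}}}$, which may reintroduce $\Delta\in\{0,h(x)\}$; from that point until the $\mathbf{D}$-phase the coordinate $s_j$ is only ever multiplied by factors $x^a+x^b$ (it can leak its value into $s_{j+1}$ only through such a multiplication). The $\mathbf{D}$-phase multiplies every $s_i$ by $x^{a_i}+x^{b_i}$, so all coordinates are clean afterwards. Applying the factors $\mathbf{U}^i_\ell$ preserves cleanliness: in each such factor the reciprocal rescalings act on clean inputs and their outputs are immediately multiplied by factors $x^a+x^b$, while the single coordinate $s_i$ that is not rescaled is only ever added to a multiplied term and therefore stays clean. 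The same reading of~\eqref{eq:px} shows each $p_{\ell_h}(x)$ has $\Delta\in\{0,h(x)\}$ and is multiplied by some $x^a+x^b$ --- during the first factor $\mathbf{L}^1_\ell$ if its coordinate index is at least $2$, and during the $\mathbf{D}$-phase otherwise --- before it is ever used as a divisor input or an unmultiplied summand.

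Granting the invariant, all coordinates $s_j$ are clean when Algorithm~\ref{alg:A1} terminates, i.e.\ $\Delta=0$; since these are exactly the recovered data polynomials, the simplified run and the full run output the same data polynomials. As the full run is the $\mathbf{LU}$-based decoder of Theorem~\ref{thm:cauchy_decomposition}, which recovers the correct data polynomials, the simplified run does so as well. The main obstacle is the bookkeeping in the third paragraph: one must check that no ``dirty'' polynomial ($\Delta=h(x)$) is ever fed to a division --- which would both violate the precondition of Algorithm~\ref{alg:division} and corrupt the result --- or added, without an intervening multiplication, into a clean polynomial. Everything goes through because in the Cauchy $\mathbf{LU}$ factorization every reciprocal rescaling is flanked by multiplications by factors $x^a+x^b$, and these factors annihilate $h(x)$ modulo $1+x^p$.
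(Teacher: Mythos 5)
Your proposal is correct and follows essentially the same route as the paper: both arguments observe that the simplified division can only perturb a result by an additive $h(x)$, which is then annihilated (via \eqref{zero}) by the subsequent multiplications by factors $x^a+x^b\in C_p$ occurring in \eqref{eq:px} and in the $\mathbf{L}$, $\mathbf{D}$, $\mathbf{U}$ stages of Algorithm~\ref{alg:A1}, so the recovered data polynomials coincide. Your $\Delta$-invariant bookkeeping is simply a more explicit version of the paper's step-by-step tracing through Steps 5, 7, 9, 12, 15/17, and 18.
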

\begin{proof}
According to Algorithm~\ref{alg:A1}	, there are two steps (Step 7 and Step 12) in it involve (simplified) Algorithm~\ref{alg:division}. In addition, after encoding, $c(x)$ might become $c(x)+h(x)$ when we applied simplified Algorithm~\ref{alg:division} for encoding.
When applying simplified Algorithm~\ref{alg:division} in the encoding process, the coded polynomials might be $c(x)+h(x)$ instead of $c(x)$. Hence, the input of Algorithm~\ref{alg:A1} becomes  $c_1(x)+a_1h(x), c_2(x)+a_2h(x), \ldots, c_\ell(x)+a_\ell h(x)$, where $a_i\in \{0,1\}$ for $1\le i\le \ell$.
Note that since $h(x)$ is the check polynomial of $C_p$, from \eqref{zero}, we have
\begin{equation}
s(x)(c(x)+h(x))=s(x)c(x) \mod (1+x^p)
\label{eq:check}
\end{equation}
$\forall s(x)\in C_p$ and $c(x)\in R_p$. Hence, after performing Step 5 in Algorithm~\ref{alg:A1}, $s_j\in C_p$ for $2\le j\le \ell$. However, after performing Step 7, $s_j(x)$ might become $s_j(x)+h(x)$ for $2\le j\le \ell$ due to performing simplified Algorithm~\ref{alg:division}. Again, after performing Step 9, the effect of $h(x)$ has eliminated according to \eqref{eq:check}. Similar argument can be applied for performing Step 12, Step 15 (or Step 17), and Step 18. Hence, we can conclude that the output of Algorithm~\ref{alg:A1} becomes the same no mater Algorithm~\ref{alg:division} or simplified Algorithm~\ref{alg:division} is applied.
\end{proof}

\subsubsection{Computation complexity}

Recall that the coded polynomial $c_j(x)$, for $j=0,1,\ldots, r-1$, is computed by
\begin{equation}
c_j(x)=\frac{1}{x^j+x^r}s_0(x)+\frac{1}{x^j+x^{r+1}}s_1(x)+\cdots+\frac{1}{x^j+x^{k+r-1}}s_{k-1}(x).
\label{eq:coded}
\end{equation}
Next we determine the computation complexity of the proposed decoding algorithm.
With Lemma \ref{lm:inverse2}, we have that the last coefficient of polynomial $\frac{1}{x^j+x^{r+\ell}}s_{\ell}(x)$ is equal to 0, $\ell=0,1,\ldots, k-1$. Therefore, the last coefficient $c_{p-1,j}$ of the coded polynomial $c_j(x)$ is equal to 0. Hence, there are $p-3$ XORs involved in computing $\frac{1}{x^j+x^{r+\ell}}s_{\ell}(x)$ by simplified Algorithm~\ref{alg:division}.
%According to \eqref{zero}, there always exist two polynomials $c(x)$ and $c(x)+h(x)$ in the ring $R_p$ such that \eqref{inverse} holds, with one polynomial having even number of nonzero terms and another polynomial having odd number of nonzero terms. Therefore, the coded polynomial $c_j(x)$ may not belong to the parity-check code $C_p$, although, for all $i$,  $s_i(x)\in C_p$. We give an example for $p=5$ and $s(x)=1+x^3$, we have $c(x)=\frac{1}{1+x}s(x)=1+x+x^2$, which has three nonzero terms.

%Some remarks on the encoding are given below. From the encoding procedure, we can observe that the coded polynomial $c_j(x)$ may not belong to $C_p$. We may add a check polynomial for the coded polynomial $c_j(x)$ if $c_j(x)\notin C_p$, and then the polynomial $c(x)+h(x)$ belongs to $C_p$. Therefore, we can also view the proposed Cauchy array code as a code over  $C_p$, when we show the MDS property of Cauchy array codes. However, we do not require that all the coded polynomials should be in  $C_p$, if we want to reduce the computational complexity as much as possible.

Note that, in Algorithm~\ref{alg:A1}, we only need to compute three different operations: (i) multiplication of $c_i(x)$ and $x^{a_i}$, (ii) division of the form $\frac{c_i(x)}{x^{a_j}+x^{b_{j-i}}}$, (iii) addition between $c_i(x)$ and $c_j(x)$. Hence, in Algorithm \ref{alg:A1},  there are total $4\ell(\ell-1)+2\ell$ multiplications of the first type, $\ell(\ell-1)$ divisions of the second type and $\ell+3\ell(\ell-1)$ additions.

The multiplication of $x^i$ and a polynomial $s(x)$ over $R_p$ can be obtained by cyclically shifting the polynomial $s(x)$ by $i$ bits, which takes no XORs. The second operation over $R_p$ requires $p-3$ XORs with performing simplified Algorithm~\ref{alg:division}. One addition needs $p$ XORs.
In Algorithm~\ref{alg:A1}, Steps 4 and 5 are the computation of the right matrix of $\mathbf{L}^i_\ell$ and the column vector $\mathbf{c}$ of length $\ell$ with each component being a polynomial in $R_p$, of which the complexity is at most $3(\ell-i)p$ XORs. In the resultant column vector $\mathbf{c}$, the first $i$ components are in $C_p$ and the last $\ell-i$ components are in $C_p$. Steps 6 to 7  calculate the left matrix of $\mathbf{L}^i_\ell$ and the above resultant column vector $\mathbf{c}$. As the last $\ell-i$ components are in $C_p$, all the divisions of the form $\frac{c_i(x)}{x^j+x^{j-i}}$ can be computed by simplified Algorithm~\ref{alg:division}, which takes $(\ell-i)(p-3)$ XORs. Recall that the last bit of polynomial $\frac{c_i(x)}{x^j+x^{j-i}}$ is zero, and the multiplication of $x^i+x^j$ and $\frac{c_i(x)}{x^j+x^{j-i}}$ thus requires $p-2$ XORs. Therefore, the total number of XORs involved in Steps 4 to 7 are
$$\underbrace{3p\ell(\ell-1)/2}_{\text{Steps 4 to 5}}+\underbrace{(p-3)\ell(\ell-1)/2}_{\text{Steps 6 to 7}}.$$
Steps 8 and 9 compute the multiplication of diagonal matrix $\mathbf{D}_\ell$ and the above resultant column vector, where the number of XORs involved are $p+(p-2)(\ell-1)$.
Steps 11 and 12 compute multiplication of  the right matrix of $\mathbf{U}^i_\ell$ and the above column vector, where $(p-3)\ell(\ell-1)/2$ XORs are required. Steps 13 to 18 calculate multiplication of  the left matrix $\mathbf{U}^i_\ell$ and the above column vector, where $(2(p-2)+p)\ell(\ell-1)/2$ XORs are needed. Therefore, the total number of XORs involved in Algorithm \ref{alg:A1} over $R_p$ is at most
\begin{align*}
&\underbrace{3p\ell(\ell-1)/2}_{\text{Steps 4 to 5}}+\underbrace{(p-3)\ell(\ell-1)/2}_{\text{Steps 6 to 7}}+\underbrace{p+(p-2)(\ell-1)}_{\text{Steps 8 to 9}}+\\
&\underbrace{(p-3)\ell(\ell-1)/2}_{\text{Steps 11 to 12}}+\underbrace{(2(p-2)+p)\ell(\ell-1)/2}_{\text{Steps 13 to 18}}\\
&=4\ell^2p-3\ell p-5\ell^2+3\ell+2.
\end{align*}

%For equation (23)
Adding overall parity-checks to $k-\gamma$ data polynomials takes $(k-\gamma)(p-2)$ XORs. Computing $\gamma$ polynomials
in \eqref{eq:px} requires $\gamma((k-\gamma)(p-3)+(k-\gamma)(p-1))=\gamma(k-\gamma)(2p-4)$ XORs. The number of XORs
involved in solving the $\gamma\times \gamma$ Cauchy system is $4\gamma^2p-3\gamma p-5\gamma^2+3\gamma+2$. In recovering
the $\delta$ parity columns, there are $\delta(k(p-3)+(k-1)(p-1))$ XORs involved. Therefore the decoding complexity of
recovering $\gamma$ information erasures and $\delta$ parity erasures is
$$(k-\gamma)(p-2)+\gamma(k-\gamma)(2p-4)+4\gamma^2p-3\gamma p-5\gamma^2+3\gamma+2+\delta(k(p-3)+(k-1)(p-1)) \text{ XORs}.$$
When $\delta=0$, i.e., only information column fails, the decoding complexity is
$$(k-\gamma)(p-2)+\gamma(k-\gamma)(2p-4)+4\gamma^2p-3\gamma p-5\gamma^2+3\gamma+2 \text{ XORs}.$$

%$$k(p-2)+k(k-1)p+k(k-1)p+k(k-1)(2p-5)=4k^2p-3kp-5k^2+3k.$$

%In Algorithm 2, the steps 2 to 6 are Consider the three operations over $\mathcal{R}_p$. According to Lemma \ref{lm:inverse}, each component of $\mathbf{b}$ is a polynomial in $\mathcal{R}_p$ with the last bit being equal to zero. Therefore, the multiplication of the diagonal matrix $D$ and $\mathbf{b}$ takes $k(p-2)$ XORs, and the resulting polynomials are in $\mathcal{C}_p$. For $s(x),c(x)\in \mathcal{C}_p$, both the addition $s(x)+c(x)$ and the multiplication $s(x)(x^i-x^j)$ takes $p$ XORs. While for the multiplication $s(x)\frac{x^i-x^j}{x^i-x^{j-i}}$ with $s(x)\in\mathcal{C}_p$, we can first compute the division $\frac{s(x)}{x^i-x^{j-i}}$ which requires $p-3$ XORs with the method given in Lemma \ref{lm:inverse}, and then do the multiplication of $x^i-x^j$ and the above resulting polynomial. Recall that the last bit of the polynomial $\frac{s(x)}{x^i-x^{j-i}}$ is zero, and the multiplication of $x^i-x^j$ and $\frac{s(x)}{x^i-x^{j-i}}$ thus requires $p-2$ XORs. Therefore, the total number of XORs of the multiplication $s(x)\frac{x^i-x^j}{x^i-x^{j-i}}$ is $2p-5$. We note that all the resulting polynomials of the three operations $s(x)+c(x)$, $s(x)(x^i-x^j)$ and $s(x)\frac{x^i-x^j}{x^i-x^{j-i}}$ belong to $\mathcal{C}_p$, thus, we can always employ the method in Lemma \ref{lm:inverse} to solve the division of the form $x^i-x^j$ with $p-3$ XORs. Therefore, the total number of XORs involved in Algorithm \ref{alg:A1} over $\mathcal{R}_p$ is
%$$k(p-2)+k(k-1)p+k(k-1)p+k(k-1)(2p-5)=4k^2p-3kp-5k^2+3k.$$

\subsection{Example of Cauchy Array Codes}
Consider  Cauchy array codes $\mathcal{C}(3,2,5)$, where $k=2,r=2,p=5$.
There are two data polynomials $s_i(x)=s_{0,i}+s_{1,i}x+s_{2,i}x^2+s_{3,i}x^3+(s_{0,i}+s_{1,i}+s_{2,i}+s_{3,i})x^4$, for $i=0,1$. Two coded polynomials $c_0(x)=c_{0,0}+c_{1,0}x+c_{2,0}x^2+c_{3,0}x^3+c_{4,0}x^4$ and $c_1(x)=c_{0,1}+c_{1,1}x+c_{2,1}x^2+c_{3,1}x^3+c_{4,1}x^4$ are computed by
\begin{align*}
c_0(x)\triangleq&\frac{1}{1+x^2} s_0(x)+ \frac{1}{1+x^{3}}s_1(x),\\
c_1(x)\triangleq &\frac{1}{x+x^2} s_0(x)+ \frac{1}{x+x^{3}}s_1(x).
\end{align*}
$\frac{s(x)}{1+x^b}$ can be solved with 2 XORs by the simplified Algorithm~\ref{alg:division}. For  $c(x)=\frac{s_0(x)}{1+x^2}$, we have $s_{0,0}=c_0+c_3$, $s_{1,0}=c_1+c_4$, $s_{2,0}=c_2+c_0$, $s_{3,0}=c_3+c_1$ and $s_{4,0}=c_4+c_2$. First we set $c_4=0$, we have $c_1=s_{1,0}$ and $c_2=s_{4,0}$. Then we can compute $c_0=s_{2,0}+s_{4,0}$ from $c_0=s_{2,0}+c_2$, and $c_3=c_0+s_{0,0}=s_{2,0}+s_{4,0}+s_{0,0}$. The total number of XORs involved in computing $\frac{s_0(x)}{1+x^2}$ is 2.

\begin{table*}[t]
\caption{The array code $\mathcal{C}(3,2,5)$.}
\label{table:325}
\[
\begin{array}{|c|c|c|c|c|c|c|} \hline
\text{Disk 0}&\text{Disk 1}&\text{Disk 2}&\text{Disk 3}   \\ \hline \hline
s_{0,0} & s_{0,1} & (s_{2,0}+s_{4,0})+(s_{1,1}+s_{3,1}+s_{4,1}) & (s_{0,0}+s_{1,0}+s_{3,0})+(s_{0,1}+s_{3,1}) \\ \hline
s_{1,0} & s_{1,1} & s_{1,0}+s_{4,1} & s_{1,0}+s_{2,1} \\ \hline
s_{2,0} & s_{2,1} & s_{4,0}+s_{2,1} & s_{4,0}+s_{0,1} \\ \hline
s_{3,0} & s_{3,1} & (s_{0,0}+s_{2,0}+s_{4,0})+(s_{1,1}+s_{4,1}) & (s_{1,0}+s_{3,0})+(s_{0,1}+s_{1,1}+s_{3,1}) \\ \hline \hline
s_{4,0} & s_{4,1} & 0 & 0 \\ \hline
\end{array}
\]
\vspace{-0.6cm}
\end{table*}
The code given in the above example is shown in Table~\ref{table:325}. The last row of the array code in Table~\ref{table:325} does not need to be stored, as the last bit of each information column is the parity-check bit of the first $p-1$ bits and the last bit of each parity column is zero.

%Consider the example given in Section \ref{sec:array_code}. Recall that the two coded polynomials are computed by
%\begin{align*}
%\begin{bmatrix}
%c_0(x)\\ c_1(x)
%\end{bmatrix}=
%\begin{bmatrix}
%\frac{1}{1+x^2} & \frac{1}{1+x^3} \\
%\frac{1}{x+x^2} & \frac{1}{x+x^3}
%\end{bmatrix}
%\begin{bmatrix}
%s_0(x)\\ s_1(x)
%\end{bmatrix}.
%\end{align*}
Assume that two data polynomials $s_0(x),s_1(x)$ are $1+x$ and $x+x^3$ respectively, then the two coded polynomials are computed as $c_0(x)=x$ and $c_1(x)=x+x^2+x^3$.

By Theorem~\ref{thm:cauchy_decomposition}, the inverse matrix of the $2\times 2$ Cauchy matrix can be factorized into
\begin{align*}
&\mathbf{U}^1_2\cdot \mathbf{D}_2\cdot \mathbf{L}^1_2=\\
&
\begin{bmatrix}
e(x) & 1+x^2  \\
0 & 1+x^3
\end{bmatrix}
\begin{bmatrix}
e(x) & 0  \\
0 & \frac{1}{x^2+x^3}
\end{bmatrix}\cdot
\begin{bmatrix}
1+x^2 & 0  \\
0 & x+x^3
\end{bmatrix}\cdot
\begin{bmatrix}
e(x) & 0  \\
0 & \frac{1}{1+x}
\end{bmatrix}
\begin{bmatrix}
e(x) & 0  \\
1+x^2 & x+x^2
\end{bmatrix}.
\end{align*}
We can check that the two data polynomials can be recovered by
\begin{eqnarray*}
&&\mathbf{U}^1_2\cdot \mathbf{D}_2\cdot \mathbf{L}^1_2 \cdot \begin{bmatrix}x \\x+x^2+x^3\end{bmatrix}\\
&=&\mathbf{U}^1_2\cdot \mathbf{D}_2\cdot \begin{bmatrix}
e(x) & 0  \\
0 & \frac{1}{1+x}
\end{bmatrix}
\begin{bmatrix}
e(x) & 0  \\
1+x^2 & x+x^2
\end{bmatrix}\begin{bmatrix}x \\x+x^2+x^3\end{bmatrix} \\
&=&\mathbf{U}^1_2\cdot \mathbf{D}_2\cdot \begin{bmatrix}
e(x) & 0  \\
0 & \frac{1}{1+x}
\end{bmatrix}\begin{bmatrix}
x  \\
1+x+x^2+x^3
\end{bmatrix}\\
&=&\mathbf{U}^1_2\cdot \begin{bmatrix}
1+x^2 & 0  \\
0 & x+x^3
\end{bmatrix}\cdot \begin{bmatrix}
x  \\
1+x^2
\end{bmatrix}\\
&=&\begin{bmatrix}
e(x) & 1+x^2  \\
0 & 1+x^3
\end{bmatrix}
\begin{bmatrix}
e(x) & 0  \\
0 & \frac{1}{x^2+x^3}
\end{bmatrix}\cdot \begin{bmatrix}
x+x^3  \\
1+x
\end{bmatrix}\\
&=&\begin{bmatrix}
e(x) & 1+x^2  \\
0 & 1+x^3
\end{bmatrix}
\cdot \begin{bmatrix}
x+x^3  \\
x^3
\end{bmatrix}\\
&=&\begin{bmatrix}
1+x  \\
x+x^3
\end{bmatrix},
\end{eqnarray*}
with 32 XORs involved.

\section{Performance Comparisons}
\label{sec:comparison}
In this section, we evaluate the encoding/decoding complexities for the proposed $\mathcal{C}(k,r,p)$ as well as
other existing Cauchy family array codes, such as Rabin-like code \cite{feng2005new2}, Circulant Cauchy codes
\cite{schindelhauer2013maximum} and CRS code \cite{Blomer1999An}, which is widely employed in many practical
distributed storage systems such as Facebook data centers \cite{XorbasVLDB}.

%We want to mention a very similar code of $\mathcal{C}(k,r,p)$, BBV code. BBV code is defined over arrays with
%dimensions $(p-1)\times(p+r)$, with $p$ information columns and $r$ parity columns. Each array column of length
%$p-1$ in BBV is viewed as polynomial of degree $\leq p-2$ over the finite field $\mathbb{F}_2$, taken modulo the
%polynomial $M_p(x)$, while $\mathcal{C}(k,r,p)$ is over the polynomial $1+x^p$. Discussing BBV in depth is
%beyond the scope of this paper, we refer to \cite{blaum1996mds} for more details.

CRS code is constructed by Cauchy matrices \cite{plank1997tutorial}. It uses projections that convert the
operations of finite filed multiplication into XORs. This leads to reduction on coding complexity because the
standard RS algorithm \cite{plank1997tutorial} consumes most of the time over finite field multiplications.
As the state-of-the-art works in correcting 4 or more erasures, Rabin-like code, Circulant Cauchy codes and CRS code
are used as main comparison to the proposed codes. Note that the coding algorithm of CRS code
involves Cauchy matrices, and it is hard to calculate the exact number of ones in the Cauchy matrices. We run
simulations for CRS code and record the average numbers from simulations to estimate the encoding/decoding complexity.

We determine the \emph{normalized encoding complexity} as the ratio of the encoding complexity to the number of
information bits, and \emph{normalized decoding complexity} as the ratio of the decoding complexity to the
number of information bits.

\subsection{Encoding Complexity}

In the  $p-1\times (k+r)$ array of code $\mathcal{C}(k,r,p)$, there are $k$ information columns and $r$ parity columns.
First, we should compute a parity-check bit for each information column to obtain $k$ data polynomials, with $k(p-2)$
XORs being involved. Second, we need to compute $r$ coded polynomials by \eqref{eq:en}. There are $p-3$ XORs
required to compute a division of form $x^t+x^{t+b}$ by simplified Algorithm~\ref{alg:division}. Each coded polynomial is generated by
computing $k$ divisions of form $1+x^b$ and $k-1$ additions. As the last coefficient is zero (by Lemma \ref{lm:inverse2}),
the $k-1$ additions takes $(k-1)(p-1)$ XORs. Therefore, $k(p-3)+(k-1)(p-1)$ XORs are required to obtain a coded
polynomial. The total number of XORs required for construction $r$ parity columns are $k(p-2)+r(2kp-4k-p+1)$,
and the normalized encoding complexity is
$$\frac{k(p-2)+r(2kp-4k-p+1)}{k(p-1)}.$$
The normalized encoding complexity of Rabin-like code and Circulant Cauchy codes is the same, which is
$3r-2+\frac{k-r}{k(p-1)}$ \cite{schindelhauer2013maximum}.

\begin{figure}[tbh]
\subfloat[$r$ = 4.]{\centering{}\includegraphics[width=0.5\textwidth]{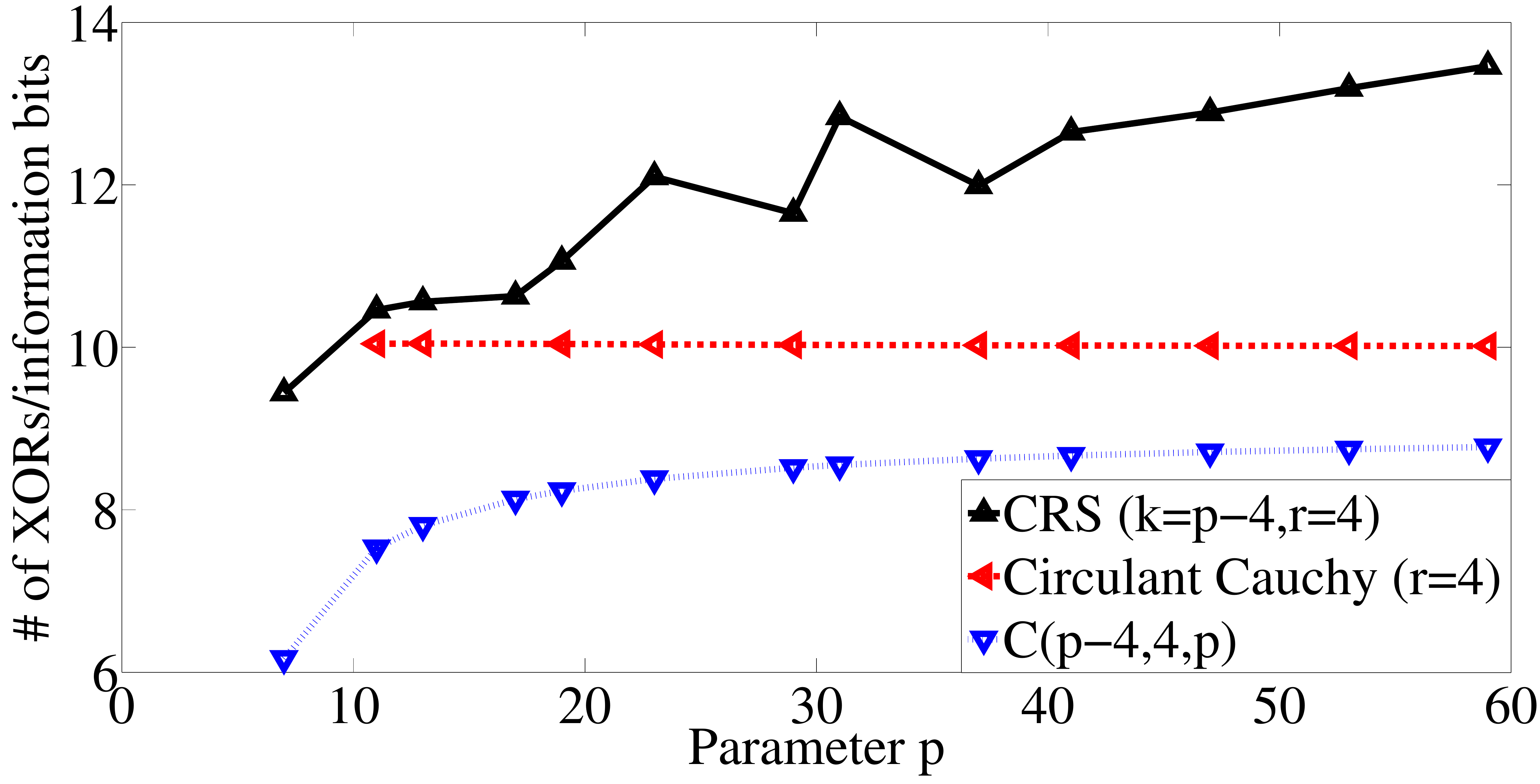}}
\subfloat[$r$ = 5.]{\begin{centering}
\includegraphics[width=0.5\columnwidth]{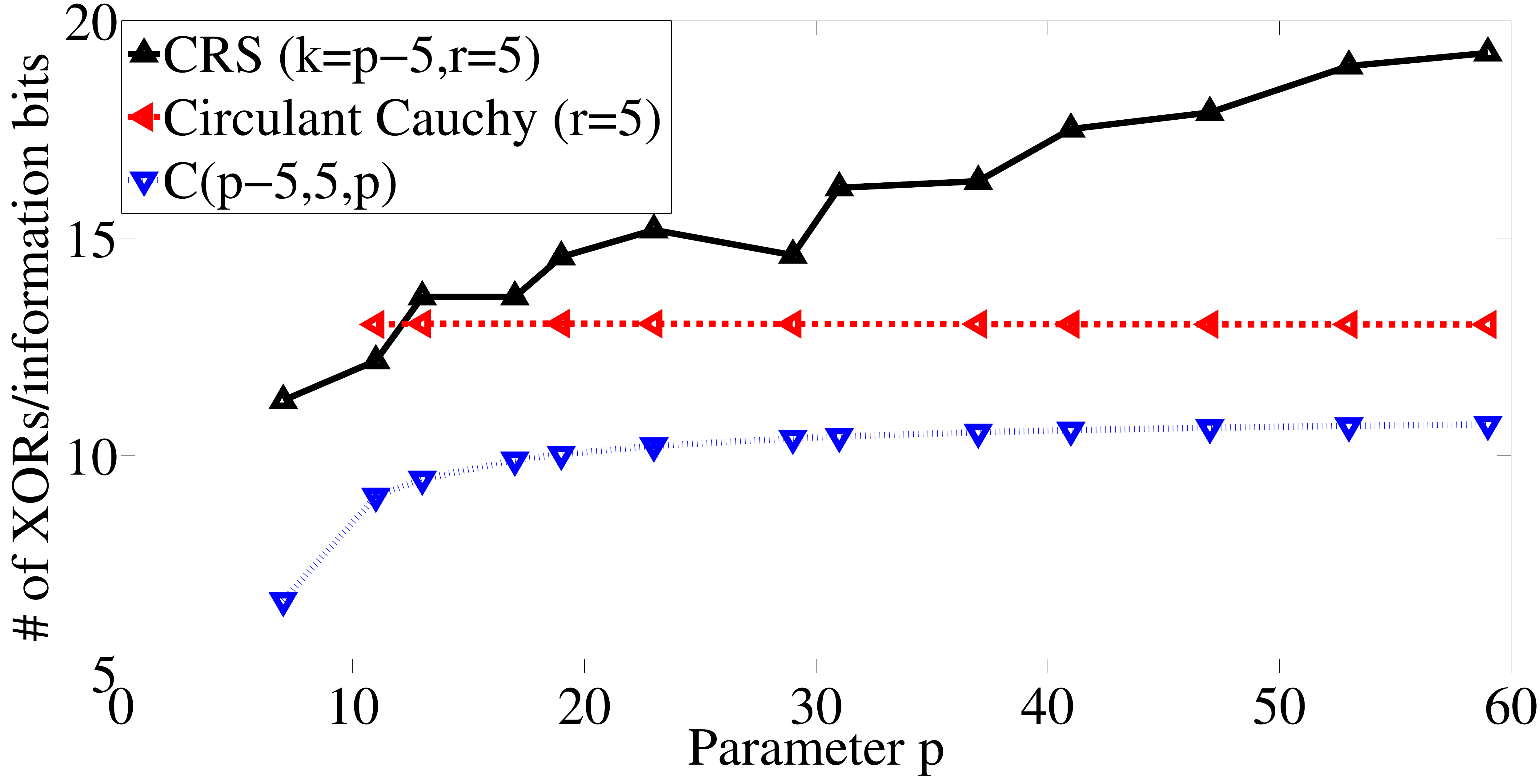}
\par\end{centering}

}\caption{The normalized encoding complexity. The complexity has been normalized to the
number of information bits.}
\vspace{-0.4cm}
\label{en:complexity}
\end{figure}

%\begin{figure}[tbh]
%\begin{centering}
%\includegraphics[width=0.85\textwidth]{cauchy_array_encoding}
%\par
%\end{centering}
%\caption{The normalized encoding complexity.}
%\label{encoding_complexity}
%\end{figure}

For fair comparison, we set $k=p-r$ for the three codes, we can thus have
the normalized encoding complexity of the proposed $\mathcal{C}(p-r,r,p)$ as
$$\frac{(p-r)(p-2)+r(2p(p-r)-4(p-r)-p+1)}{(p-r)(p-1)}.$$
The normalized encoding complexities of Circulant Cauchy codes, CRS code and $\mathcal{C}(p-r,r,p)$ for $r=4$ and $r=5$
are shown in Fig. \ref{en:complexity}.
%The normalized encoding complexity of Rabin-Like code is a constant for $r=4$.
%Although the normalized encoding complexity of $\mathcal{C}(k=p-4,4,p)$ tends to 9 when $p$ is large enough,
%it is strictly less than that of Rabin-Like code for all tested values.
For all the values of parameter $p$, the encoding complexity of $\mathcal{C}(p-r,r,p)$ is less than those
of Circulant Cauchy codes and CRS codes. Note that the difference between the proposed code and others becomes larger when $r$ increases. When $r=4$,  the reduction on  the encoding complexity of $\mathcal{C}(p-4,4,p)$ over
Circulant Cauchy codes and CRS codes are 12.5\%-38.0\% and 23.6\%-34.9\%, respectively. When  $r=5$,
they increases to  17.7\%-47.8\%
and 25.6\%-44.4\%, respectively.
%In CRS code, we let $k=p-4$. We can
%observe that the normalized encoding complexity of CRS code increases with the increase in $k$. The reason
%is due to $w$ is limited by $n\leq 2^w$, where $w$ is a message packet size in CRS code. If we want an increase
%in the number of disks, the value of $w$ also increases, and this leads to increase in XOR operations.

\subsection{Decoding Complexity}
In the following, we evaluate the decoding complexity of the proposed array codes $\mathcal{C}(k,r,p)$,
CRS codes and Circulant Cauchy codes. If no information column fails, then  the decoding procedure of parity column failure can be viewed as
a special case of the encoding procedure. Hence,  we only consider the case with at least one information column fail.

We let $k=p-r$ for the three codes, and we have
the normalized decoding complexity of the proposed $\mathcal{C}(p-r,r,p)$ as
$$\frac{(p-2r)(p-2)+r(p-2r)(2p-4)+4r^2p-3r p-5r^2+3r+2}{(p-r)(p-1)}.$$
The authors in \cite{schindelhauer2013maximum} gave the normalized decoding complexity of
Circulant Cauchy codes as
$\frac{3rp(p-r)+6r^2p}{(p-r)(p-1)}$.
%and the authors of Rabin-Like codes only
%estimated the order of decoding complexity that is $O((p-1)^3r^4)$ when $r\geq 5$.

\begin{figure}[tbh]
\subfloat[$r$ = 4.]{\centering{}\includegraphics[width=0.5\textwidth]{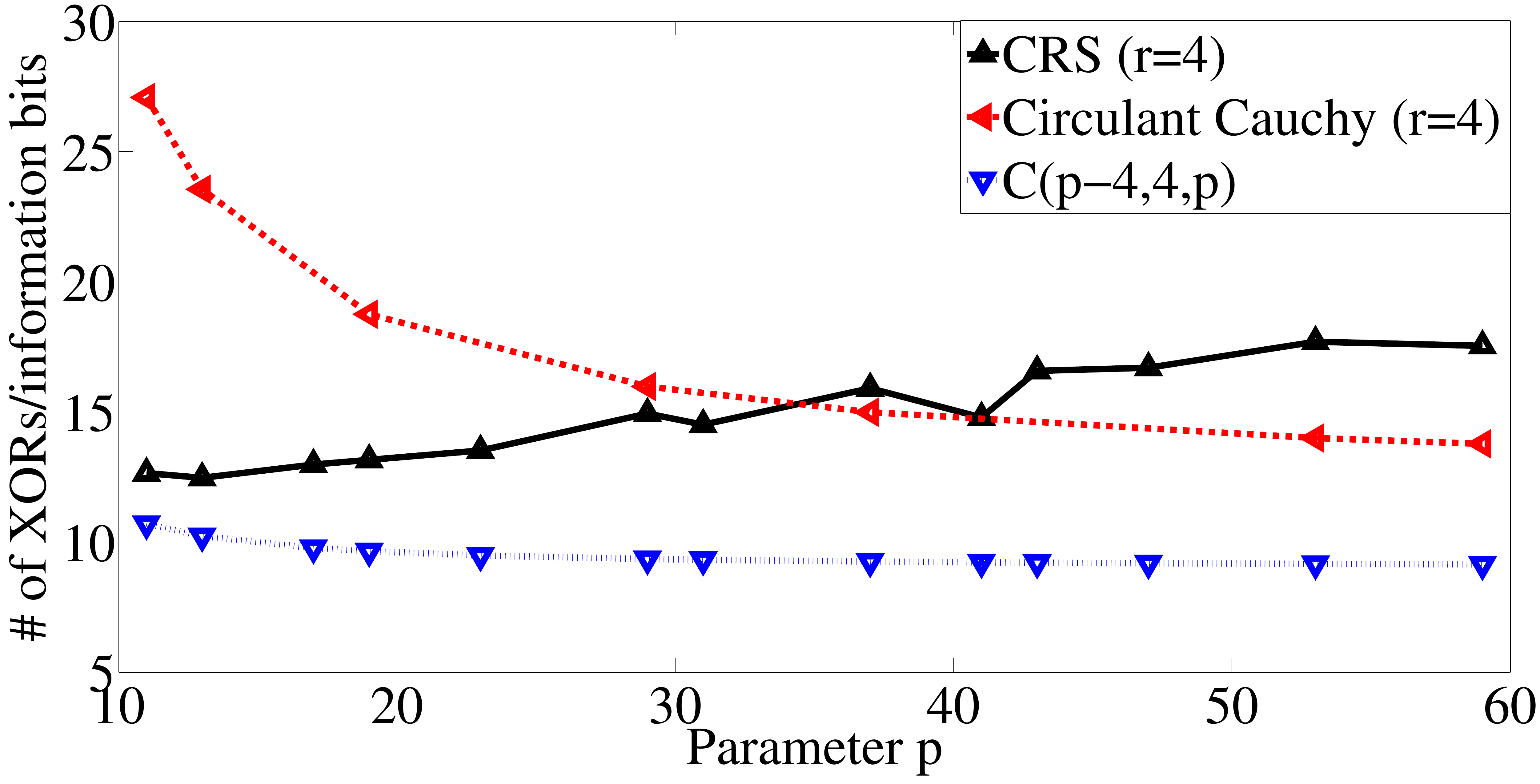}}
\subfloat[$r$ = 5.]{\begin{centering}
\includegraphics[width=0.5\columnwidth]{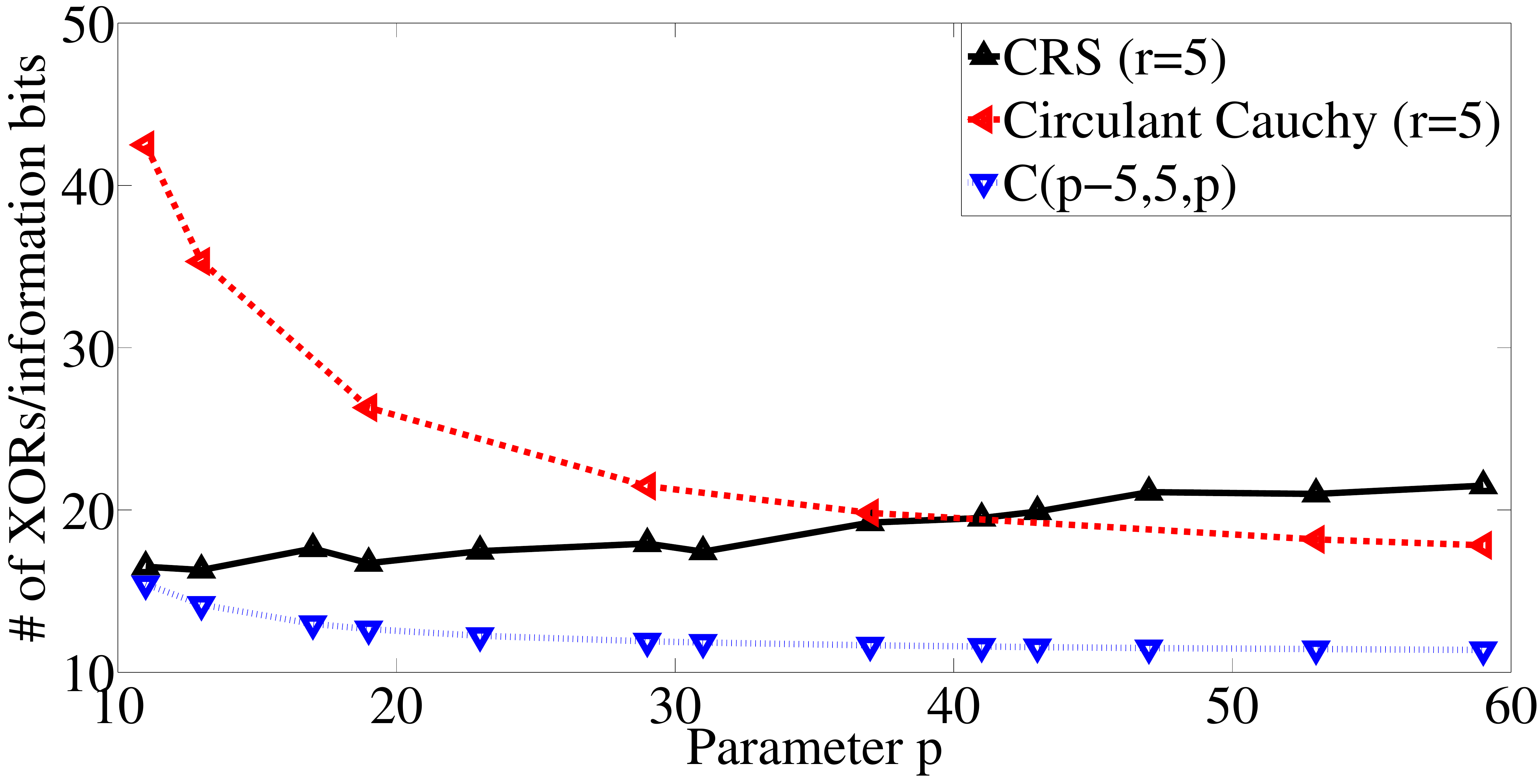}
\par\end{centering}

}\caption{The normalized decoding complexity. The complexity has been normalized to the
number of information bits.}
\vspace{-0.4cm}
\label{de:complexity}
\end{figure}

The normalized decoding complexities of $r=4$ and $r=5$ are shown in Fig. \ref{de:complexity}. We observe that
the decoding complexity of CRS codes increases as $p$ increases, and the decoding complexity of Circulant
Cauchy codes decreases while $p$ increases, where $r$ is fixed. However, the normalized decoding complexity
of $\mathcal{C}(p-r,r,p)$ is almost the same for different values of $p$ when $r$ is constant.
In general, the decoding complexity of $\mathcal{C}(p-r,r,p)$ is much less than that of CRS codes and Circulant
Cauchy codes, and the complexity difference between $\mathcal{C}(p-r,r,p)$ and CRS codes becomes larger when $p$ increases.
When $r=4$, the percentage of improvement over CRS codes and Circulant Cauchy codes varies between 15.4\% and 47.9\%, and
33.6\%-60.5\%, respectively. When $r=5$, the percentage of improvement over CRS codes and Circulant Cauchy codes
varies between 6.5\% and 47.1\%, and 36.2\%-63.7\%, respectively.

\section{Conclusion}

\label{sec:discussions}
We propose a construction of Cauchy array codes over a specific binary cyclic ring which employ XOR and bit-wise cyclic shifts. These codes have been proved with MDS property. We present an $\mathbf{LU}$ factorization of Cauchy matrix over the binary cyclic ring and propose an efficient decoding algorithm based on the $\mathbf{LU}$ factorization of Cauchy matrix.
%The data packets can be represented by polynomials over $\mathbb{F}_2$, and the global and local encoding coefficients are also polynomials over $\mathbb{F}_2$.
%For functional repair, all operating points on the fundamental tradeoff curve between storage and repair bandwidth can be achieved asymptotically when the file size is large, with much less computational complexity compared with RGC over finite field.
We show that the proposed Cauchy array code improve the encoding complexity and decoding complexity over existing codes.

We conclude with few future work. In the constructed array codes, the parameter $p$ is restricted to be a special class of prime number. It could be interesting to find out whether there exist MDS Cauchy array codes without this restriction. When there is a single column fails, the total number of bits downloaded from the surviving columns is termed as repair bandwidth. How to recover the failed column with repair bandwidth as little as possible is another interesting future work.

\appendices

\bibliographystyle{IEEEtran}
\bibliography{IEEEabrv,CNC}

% Generated by IEEEtran.bst, version: 1.14 (2015/08/26)
\begin{thebibliography}{10}
\providecommand{\url}[1]{#1}
\csname url@samestyle\endcsname
\providecommand{\newblock}{\relax}
\providecommand{\bibinfo}[2]{#2}
\providecommand{\BIBentrySTDinterwordspacing}{\spaceskip=0pt\relax}
\providecommand{\BIBentryALTinterwordstretchfactor}{4}
\providecommand{\BIBentryALTinterwordspacing}{\spaceskip=\fontdimen2\font plus
\BIBentryALTinterwordstretchfactor\fontdimen3\font minus
  \fontdimen4\font\relax}
\providecommand{\BIBforeignlanguage}[2]{{%
\expandafter\ifx\csname l@#1\endcsname\relax
\typeout{** WARNING: IEEEtran.bst: No hyphenation pattern has been}%
\typeout{** loaded for the language `#1'. Using the pattern for}%
\typeout{** the default language instead.}%
\else
\language=\csname l@#1\endcsname
\fi
#2}}
\providecommand{\BIBdecl}{\relax}
\BIBdecl

\bibitem{RAID89}
D.~A. Patterson, P.~Chen, G.~Gibson, and R.~H. Katz, ``Introduction to
  {R}edundant {A}rrays of {I}nexpensive {D}isks ({RAID}),'' in \emph{Proc. IEEE
  COMPCON}, vol.~89, 1989, pp. 112--117.

\bibitem{RAID93}
P.~M. Chen, E.~K. Lee, G.~A. Gibson, R.~H. Katz, and D.~A. Patterson, ``{RAID}:
  high-performance, reliable secondary storage,'' University of California at
  Berkeley, Berkeley, Tech. Rep. CSD 03-778, 1993.

\bibitem{corbett2004row}
P.~Corbett, B.~English, A.~Goel, T.~Grcanac, S.~Kleiman, J.~Leong, and
  S.~Sankar, ``Row-diagonal parity for double disk failure correction,'' in
  \emph{Proc. of the 3rd USENIX Conf. on File and Storage Technologies (FAST)},
  2004, pp. 1--14.

\bibitem{blaum1995evenodd}
M.~Blaum, J.~Brady, J.~Bruck, and J.~Menon, ``{EVENODD}: An efficient scheme
  for tolerating double disk failures in {RAID} architectures,'' \emph{IEEE
  Trans. Computers}, vol.~44, no.~2, pp. 192--202, 1995.

\bibitem{BeyondRAID}
A.~H. Leventhal, ``Triple-parity {RAID} and beyond,'' \emph{Comm. of the ACM},
  vol.~53, no.~1, pp. 58--63, January 2010.

\bibitem{blaum1996mds}
M.~Blaum, J.~Bruck, and A.~Vardy, ``{MDS} array codes with independent parity
  symbols,'' \emph{IEEE Trans. Information Theory}, vol.~42, no.~2, pp.
  529--542, 1996.

\bibitem{xiang2010optimal}
L.~Xiang, Y.~Xu, J.~Lui, and Q.~Chang, ``Optimal recovery of single disk
  failure in {RDP} code storage systems,'' in \emph{ACM SIGMETRICS Performance
  Evaluation Rev.}, vol.~38, no.~1.\hskip 1em plus 0.5em minus 0.4em\relax ACM,
  2010, pp. 119--130.

\bibitem{blaum2001evenodd}
M.~Blaum, J.~Brady, J.~Bruck, J.~Menon, and A.~Vardy, ``The {EVENODD} code and
  its generalization,'' \emph{High Performance Mass Storage and Parallel I/O},
  pp. 187--208, 2001.

\bibitem{huang2008star}
C.~Huang and L.~Xu, ``{STAR}: An efficient coding scheme for correcting triple
  storage node failures,'' \emph{IEEE Trans. Computers}, vol.~57, no.~7, pp.
  889--901, 2008.

\bibitem{wang2012triple}
Y.~Wang, G.~Li, and X.~Zhong, ``Triple-{S}tar: A coding scheme with optimal
  encoding complexity for tolerating triple disk failures in {RAID},''
  \emph{International Journal of innovative Computing, Information and
  Control}, vol.~3, pp. 1731--1472, 2012.

\bibitem{blaum2006family}
M.~Blaum, ``A family of {MDS} array codes with minimal number of encoding
  operations,'' in \emph{IEEE Int. Symp. on Inf. Theory}, 2006, pp. 2784--2788.

\bibitem{feng2005new2}
G.-L. Feng, R.~H. Deng, F.~Bao, and J.-C. Shen, ``New efficient {MDS} array
  codes for {RAID}. {P}art {II}. {R}abin-like codes for tolerating multiple (=
  4) disk failures,'' \emph{IEEE Trans. Computers}, vol.~54, no.~12, pp.
  1473--1483, 2005.

\bibitem{blaum2002evenodd}
M.~Blaum, J.~Brady, J.~Bruck, J.~Menon, and A.~Vardy, ``The {EVENODD} code and
  its generalization: An effcient scheme for tolerating multiple disk failures
  in {RAID} architectures,'' in \emph{High Performance Mass Storage and
  Parallel I/O}.\hskip 1em plus 0.5em minus 0.4em\relax Wiley-IEEE Press, 2002,
  ch.~8, pp. 187--208.

\bibitem{Blomer1999An}
J.~Blomer, M.~Kalfane, R.~Karp, M.~Karpinski, M.~Luby, and D.~Zuckerman, ``An
  {XOR}-based erasure-resilient coding scheme,'' \emph{Proc ACM Sigcomm}, 1999.

\bibitem{schindelhauer2013maximum}
C.~Schindelhauer and C.~Ortolf, ``Maximum distance separable codes based on
  circulant {C}auchy matrices,'' in \emph{Structural Information and
  Communication Complexity}.\hskip 1em plus 0.5em minus 0.4em\relax Springer,
  2013, pp. 334--345.

\bibitem{Plank2006Optimizing}
J.~S. Plank and L.~Xu, ``Optimizing {C}auchy {R}eed-{S}olomon codes for
  fault-tolerant network storage applications.'' in \emph{IEEE International
  Symposium on Network Computing and Applications}, 2006, pp. 173--180.

\bibitem{Blaum1999On}
M.~Blaum and R.~M. Roth, ``On lowest density {MDS} codes,'' \emph{IEEE
  Transactions on Information Theory}, vol.~45, no.~1, pp. 46--59, 1999.

\bibitem{Plank2009A}
J.~S. Plank, J.~Luo, C.~D. Schuman, L.~Xu, and Z.~Wilcox-O'Hearn, ``A
  performance evaluation and examination of open-source erasure coding
  libraries for storage,'' in \emph{Proccedings of Conference on File \&
  Storage Technologies}, 2009, pp. 253--265.

\bibitem{plank2008jerasure}
J.~S. Plank, S.~Simmerman, and C.~D. Schuman, ``Jerasure: A library in
  {C}/{C}++ facilitating erasure coding for storage applications-version 1.2,''
  \emph{University of Tennessee, Tech. Rep. CS-08-627}, vol.~23, 2008.

\bibitem{Huang2005On}
C.~Huang, J.~Li, and M.~Chen, ``On optimizing {XOR}-based codes for
  fault-tolerant storage applications,'' \emph{ITW’07 Information Theory
  Workshop IEEE}, pp. 218--223, 2005.

\bibitem{Plank2011XOR}
J.~S. Plank, ``{XOR}'s, lower bounds and {MDS} codes for storage,''
  \emph{ITW’07 Information Theory Workshop IEEE}, pp. 503 -- 507, 2011.

\bibitem{Hafner2005Matrix}
J.~L. Hafner, V.~Deenadhayalan, K.~K. Rao, and J.~A. Tomlin, ``Matrix methods
  for lost data reconstruction in erasure codes,'' in \emph{Conference on
  Usenix Conference on File \& Storage Technologies-volume}, 2005, pp.
  183--196.

\bibitem{Yin2011Acoustic}
C.~Yin, J.~Wang, H.~Lv, Z.~Cui, L.~Cheng, Q.~Zhan, and T.~Li, ``Acoustic
  emission testing research of composites bearing based on neural network,'' in
  \emph{Intelligent Human-Machine Systems and Cybernetics (IHMSC), 2011
  International Conference on}, 2011, pp. 165--168.

\bibitem{Plank2012Heuristics}
J.~S. Plank, C.~D. Schuman, and B.~D. Robison, ``Heuristics for optimizing
  matrix-based erasure codes for fault-tolerant storage systems,'' in
  \emph{2013 43rd Annual IEEE/IFIP International Conference on Dependable
  Systems and Networks (DSN)}, 2012, pp. 1--12.

\bibitem{Zhang2015CaCo}
G.~Zhang, G.~Wu, S.~Wang, and J.~Shu, ``Caco: An efficient cauchy coding
  approach for cloud storage systems,'' \emph{IEEE Transactions on Computers},
  pp. 1--13, 2015.

\bibitem{Fenn1997Bit}
S.~T.~J. Fenn, M.~G. Parker, M.~Benaissa, and D.~Taylor, ``Bit-serial
  multiplication in {GF}($2^m$) using irreducible all-one polynomials,''
  \emph{IEEE Proceedings on Computers and Digital Techniques}, vol. 144, no.~6,
  pp. 391--393, 1997.

\bibitem{boros1999fast}
T.~Boros, T.~Kailath, and V.~Olshevsky, ``A fast parallel
  bj{\"o}rck--pereyra-type algorithm for solving {C}auchy linear equations,''
  \emph{Linear Algebra and Its Applications}, vol. 302, pp. 265--293, 1999.

\bibitem{Calvetti1997Factorizations}
D.~Calvetti and L.~Reichel, ``Factorizations of {C}auchy matrices,''
  \emph{Journal of Computational and Applied Mathematics}, vol.~86, no.~1, pp.
  103--123, 1997.

\bibitem{XorbasVLDB}
M.~Sathiamoorthy, M.~Asteris, D.~Papailiopoulos, A.~G. Dimakis, R.~Vadali,
  S.~Chen, and D.~Borthakur, ``{XOR}ing elephants: Novel erasure codes for big
  data,'' in \emph{Proc. of the 39th Int. Conf. on Very Large Data Bases},
  Trento, August 2013.

\bibitem{plank1997tutorial}
J.~S. Plank \emph{et~al.}, ``A tutorial on {R}eed-{S}olomon coding for
  fault-tolerance in {RAID}-like systems,'' \emph{Softw., Pract. Exper.},
  vol.~27, no.~9, pp. 995--1012, 1997.

\end{thebibliography}

\end{document}